\def\BibTeX{{\rm B\kern-.05em{\sc i\kern-.025em b}\kern-.08emT\kern-.1667em\lower.7ex\hbox{E}\kern-.125emX}}
\newdimen\@tempdimd
\pgfplotsset{compat=1.9}
\DeclareMathAlphabet{\mathcal}{OMS}{cmsy}{m}{n}
\newcommand{\bnm}{\begin{newmath}}
\newcommand{\enm}{\end{newmath}}
\newcommand{\bea}{\begin{eqnarray*}}%
\newcommand{\eea}{\end{eqnarray*}}%
\newcommand{\bne}{\begin{newequation}}
\newcommand{\ene}{\end{newequation}}
\newcommand{\bal}{\begin{newalign}}
\newcommand{\eal}{\end{newalign}}
\newenvironment{newalign}{\begin{align}%
\setlength{\abovedisplayskip}{1pt}%
\setlength{\belowdisplayskip}{1pt}%
\setlength{\abovecaptionskip}{1pt}%
\setlength{\belowcaptionskip}{1pt}%
\setlength{\abovedisplayshortskip}{1pt}%
\setlength{\belowdisplayshortskip}{1pt} }{\end{align}}
\newenvironment{newmath}{\begin{displaymath}%
\setlength{\abovedisplayskip}{4pt}%
\setlength{\belowdisplayskip}{4pt}%
\setlength{\abovedisplayshortskip}{6pt}%
\setlength{\belowdisplayshortskip}{6pt} }{\end{displaymath}}
\newenvironment{newequation}{\begin{equation}%
\setlength{\abovedisplayskip}{4pt}%
\setlength{\belowdisplayskip}{4pt}%
\setlength{\abovedisplayshortskip}{6pt}%
\setlength{\belowdisplayshortskip}{6pt} }{\end{equation}}
\newcounter{ctr}
\newcounter{mytable}
\def\mytable{\begin{centering}\refstepcounter{mytable}}
\def\endmytable{\end{centering}}
\newcounter{myfig}
\def\myfig{\begin{centering}\refstepcounter{myfig}}
\def\endmyfig{\end{centering}}
\newlength{\saveparindent}
\newlength{\saveparskip}
\newcommand{\E}{{\rm I\kern-.3em E}}
\renewcommand{\eqref}[1]{\mbox{Equation~(\ref{#1})}}
\def \part {part}
\renewcommand{\paragraph}[1]{\vspace*{6pt}\noindent\textbf{#1}\;}
\newtheoremstyle{compressed}
  {5pt} 
  {5pt} 
  {\itshape} 
  {.7em} 
  {\scshape} 
  {.} 
  {.5em} 
  {} 
\theoremstyle{compressed}
\newtheorem{theorem}{Theorem}[section]
\def \blackslug{\hbox{\hskip 1pt \vrule width 4pt height 8pt
    depth 1.5pt \hskip 1pt}}
\def \qed{\quad\blackslug\lower 8.5pt\null\par}
\newcounter{mynote}[section]
\newcommand\ignore[1]{}
\newcounter{rcnote}[section]
\newcounter{mrnote}[section]
\newcounter{fknote}[section]
\newcounter{anote}[section]
\DeclareMathSymbol{\mlq}{\mathord}{operators}{``}
\DeclareMathSymbol{\mrq}{\mathord}{operators}{`'}
\newcommand{\rhf}[2]{R_{f, \gamma}}
\DeclareDocumentCommand{\edist}{o o}{
  \ensuremath{
    \IfNoValueTF{#1}{{d}}{{\sf d}(#1,#2)}
  }
}
\newcommand{\olrk}[1]{\ifx\nursymbol#1\else\!\!\mskip4.5mu plus 0.5mu\left(\mskip0.5mu plus0.5mu #1\mskip1.5mu plus0.5mu \right)\fi}
\NewDocumentCommand{\indseq}{ O{1} O{r} }{{#1}\ldots {#2}}
\newcommand{\neil}[1]{\textcolor{red}{\textbf{Neil:} #1}}
\newcommand{\heidi}[1]{\textcolor{blue}{\textbf{Heidi:} #1}}
\newcommand{\natacha}[1]{\textcolor{cyan}{\textbf{Natacha:} #1}}
\newcommand{\nc}[1]{\textcolor{cyan}{\textbf{Natacha:} #1}}
\newcommand{\alin}[1]{\textcolor{ForestGreen}{\textbf{Alin:} #1}}
\newcommand{\ittai}[1]{\textcolor{brown}{\textbf{Ittai comment:} #1}}
\newcommand{\IA}[1]{\textcolor{brown}{\textbf{IA:} #1}}
\newcommand{\fs}[1]{\textcolor{teal}{\textbf{FS:} #1}}
\newcommand{\changebars}[2]{\textcolor{red}{(#1)\st{(#2)}}}
\newcommand{\neil}[1]{}
\newcommand{\heidi}[1]{}
\newcommand{\natacha}[1]{}
\newcommand{\nc}[1]{}
\newcommand{\alin}[1]{}
\newcommand{\ittai}[1]{}
\newcommand{\IA}[1]{}
\newcommand{\fs}[1]{}
\newcommand{\changebars}[2]{}
\algnewcommand{\LineComment}[1]{\State \(\triangleright\) \textcolor{gray}{#1}}
\newcommand{\msgfont}[1]{\textsc{#1}}
\newcommand{\msg}[2]{$\langle$\msgfont{#1}, $#2\rangle$}
\newcommand{\emsg}[2]{$\langle$\msgfont{#1}$\rangle$}
\algnewcommand{\algorithmicstate}{\textbf{State:}}
\algnewcommand{\GlobalState}{\item[\algorithmicstate]}
\theoremstyle{definition} 
\newcommand{\sys}{BeeGees}
\newcommand{\consec}{CHL}
\newcommand{\nonconsec}{AHL}
\newcommand{\phaseone}{Prepare}
\begin{document}
\def\thetitle{\sys{}: stayin' alive in chained BFT}
\title{\thetitle}






\author{Ittai Abraham}
\affiliation{
\institution{VMware Research}
\country{Israel}
}
\email{iabraham@vmware.com}

\author{Natacha Crooks}
\affiliation{
\institution{UC Berkeley}
\country{USA}
}
\email{ncrooks@berkeley.edu}

\author{Neil Giridharan}\authornote{Lead author.}
\affiliation{
\institution{UC Berkeley}
\country{USA}
}
\email{giridhn@berkeley.edu}

\author{Heidi Howard}
\affiliation{
\institution{Microsoft Research}
\country{UK}
}
\email{heidi.howard@microsoft.com}

\author{Florian Suri-Payer}
\affiliation{
\institution{Cornell University}
\country{USA}
}
\email{fsp@cs.cornell.edu}
%
%


\begin{abstract}
Modern \textit{chained} Byzantine Fault Tolerant (BFT) systems leverage a combination of pipelining and leader rotation to obtain both efficiency and fairness. 
%
These protocols, however, require a sequence of three or four \textit{consecutive} honest leaders to commit operations. Therefore, even simple leader failures such as crashes can weaken liveness both theoretically and practically.
%
Obtaining a chained BFT protocol that reaches decisions even if the sequence of honest leaders is non-consecutive, remains an open question.

To resolve this question we present \textit{\sys{}}\footnote{\sys{} \textit{stays (a-)live} against the odds.}, a novel chained BFT protocol that successfully commits blocks even with non-consecutive honest leaders. It does this while also maintaining quadratic word complexity with threshold signatures, linear word complexity with SNARKs, and responsiveness between consecutive honest leaders.
%
\sys{} reduces the expected commit latency of HotStuff by a factor of three under failures, and the worst-case latency by a factor of seven.
\end{abstract}

\maketitle

%



\section{Introduction}

Blockchain systems have emerged as a promising way for mutually distrustful parties to compute over shared data. Byzantine Fault Tolerant (BFT) state machine replication (SMR), the core protocol in most blockchains, provides to applications the abstraction of a centralized, trusted, and always available server. BFT SMR guarantees that a set of replicas will agree on a common sequence of operations, even though some nodes may misbehave. Blockchain systems add two additional constraints over prior work.
1) operation ordering should be \textit{fair}: it must closely follow the order in which operations are submitted, 
and offer no single party undue influence in the process. Protocols without fairness can be abused by the application: participants may censor or front-run to gain economic advantages. 2) protocols should scale to large number of replicas (in the hundreds). 

To address these concerns, recent BFT SMR protocols targeted at blockchains, such as HotStuff~\cite{yin2019hotstuff}, DiemBFTv4~\cite{diem2021}, Fast-Hotstuff~\cite{jalalzai2020fasthotstuff} as well as the largest Proof-of-Stake system, Ethereum (Casper FFG~\cite{buterin2017casper}), are structured around two key building blocks: 
    \ul{\textit{Chaining.}} Every BFT protocol requires a (worst-case) minimum of two voting rounds (henceforth \textit{phases}). Each voting phase aims to establish a \textit{quorum certificate (QC)} by collecting a set of signed votes from a majority of honest replicas. Blockchain systems \textit{pipeline} the voting phases of consecutive proposals to avoid redundant coordination and cryptography: the system can use the quorum certificate of the second voting phase of block $i$ to certify the first phase of block $i+1$.
    Each block then requires (on average) generating and verifying the signatures of a single QC. This is especially important for large participant sets as QC sizes grow linearly with the number of replicas, increasing cryptographic costs.
    
    \hspace{3pt}  \ul{\textit{Leader-Speaks-Once (LSO).}} To minimize fairness concerns associated with leader-based solutions and to decrease the influence of adaptive adversaries (who control the network), BFT protocols targeted at blockchains adopt a \textit{leader-speak-once} (LSO) model. In LSO, each leader proposes and certifies a single block after which the leader is immediately rotated out as part of a new \textit{view}. Electing a different leader per block limits the leader's influence; it can manipulate transactions in the proposed block only. Traditional BFT protocols (such as PBFT~\cite{castro1999pbft}), in contrast, adopt a \textit{stable-leader} paradigm in which leaders are only replaced if they fail to make progress through a  fallback \textit{view change} protocol. Failures are assumed to be infrequent, and thus protocol complexities (and costs) are intentionally moved into this view change, allowing for a simpler and more efficient failure-free steady case.

While a joint approach that is both \textit{chained} and \textit{leader-speak-once} (CLSO) is desirable, the combination of these two properties also introduces a new challenge: how to preserve safety when block commitment is spread across leaders? This work observes that all prior CLSO protocols solve this challenge by \textit{unintentionally} relinquishing liveness -- and proposes a novel protocol that manages to avoid this trade-off (without sacrificing performance). 

\par \textbf{The problem.} To maintain safety, block commitment in prior CLSO protocols requires a sequence of $k$ QCs in consecutive views (where $k\in \{2,3\}$ depending on protocol). Consequently, liveness is only guaranteed in the presence of $k+1$ \textit{consecutive} honest (non-faulty) leaders. In the remainder of this paper, we refer to this property as \consec{} (consecutive honest leaders).


\begin{definition} (\consec).
After GST, if an honest leader in view $v$ proposes a value, then it is guaranteed to commit if views $v+1,v+2,...,v+k$ (contiguous views) have honest leaders .
\end{definition}
This constraint introduces significant performance penalties in practice.
We show in Section~\ref{sec:background} that in for HotStuff~\cite{yin2019hotstuff} example -- the protocol at the core of the former Diem blockchain~\cite{diem} (now Aptos~\cite{aptos})  -- a single faulty leader may suffice to prevent \textit{any} block from being committed for some system configurations. Further, we show that even for arbitrary configurations, faulty leaders can always greatly reduce protocol throughput. Worse, this attack does not require any explicit equivocation; it suffices for a faulty leader to simply delay responding, making it hard to detect misbehavior --, and thus represents a significant exploit opportunity for a Byzantine attacker. To the best of our knowledge, this liveness concern is present in \textit{all} existing CLSO protocols today. This paper asks: is this fundamental or can we do better? 

\par \textbf{Our solution.} We find that yes, it is possible to improve the liveness guarantee offered by CLSO protocols. To this effect, we propose \sys{}, a new consensus protocol that strengthens liveness and instead satisfies the following stronger property we call \nonconsec{} (any honest leader):





\begin{definition} (\nonconsec).
After GST, if an honest leader in view $v$ proposes a value, then it is guaranteed to commit if views $v_{i_1}<v_{i_2}<...v_{i_k}$ (non-contiguous views) have honest leaders.
\end{definition}

When QCs are not contiguous, it becomes possible for conflicting QCs to form unbeknownst to the current leader; these QCs can trigger safety violations when committing a block.
\sys{}'s core insight lies in observing that \textit{\phaseone{}} messages (called \textsc{prepare}  messages in Hotstuff, \textsc{Pre-prepare} messages in PBFT), which are traditionally discarded by BFT protocols, can in fact be leveraged to strengthen liveness (\nonconsec{}). In the presence of omission faults or asynchrony, \sys{} uses these messages to \textit{prevent} conflicting QCs from forming. In the presence of equivocation, \sys{} instead uses \textit{\phaseone{}} messages to reliably detect when a conflicting QC could have
formed and eagerly abort block commitment.
\textit{\phaseone{}} messages further allow \sys{} to detect \textit{implicit QCs}, QCs for blocks proposed by honest leaders that would have formed but for a malicious leader failing to disseminate them. Together, these properties allow \sys{} to be the first CLSO protocol to satisfy \nonconsec{}.
Satisfying this stronger liveness property drastically curbs the impact of a Byzantine leader on the system: after GST, no node can delay the commitment of an honest leader's proposal by more than one view.
Importantly, \sys{} achieves this without sacrificing performance: it has optimal latency of two phases~\cite{good-case}, quadratic word complexity when used with threshold signatures~\cite{gueta2019sbft}, and linear word complexity with SNARKs~\cite{abspoel2020malicious}, matching the state of the art (\cite{jolteon-ditto,jalalzai2020fasthotstuff}).


\par \textbf{Understanding the limits of existing protocols.} 
\sys{} is the first protocol to satisfy the stronger liveness property \nonconsec{} while also maintaining safety. This is a result of it being the only protocol to guarantee a property called \textit{sequentiality}. In this work, we observe that sequentiality is a necessary property for \nonconsec{}, yet it is not offered by any prior CLSO protocol. This explains why prior attempts were unable to safely offer \nonconsec{}~\cite{bano2020twins}. Roughly speaking, sequentiality mandates that, after GST, for any pair of two honest leaders in views $v$ and $v'>v$, $v'$'s leader must extend the block proposed by $v$'s leader.
\begin{theorem}\label{thm:clso-seq}
\nonconsec{} CLSO is achievable only if sequentiality is satisfied.
\end{theorem}

This paper is structured as follows. We first introduce relevant background (\S\ref{sec:prelim}, \S\ref{sec:background}) before presenting \sys{}  (\S\ref{sec:protocol}).  
We experimentally validate our claims (\S\ref{sec:complexity}),
before proving that sequentiality is necessary (\S\ref{sec:sequentiality}) and concluding (\S\ref{sec:conclusion}). 
An earlier iteration of this result was published as a brief announcement~\cite{siestaBA}. Our previous effort included an algorithm that makes progress with non-consecutive leaders in some but not all settings and thus does not satisfy \nonconsec{}.
\section{Preliminaries}
\label{sec:prelim}

We adopt the standard BFT system model in which $n=3f+1$ replicas communicate through a reliable, authenticated, point-to-point network where at most $f$ replicas are faulty. A strong but static adversary can coordinate faulty replicas' actions but cannot break standard cryptographic primitives. We adopt the partially synchronous model, where there exists a known 
upper bound
$\Delta$ on the communication delay, and an unknown Global Stabilization Time (GST) after  all messages will arrive within $\Delta$ \cite{dwork1988consensus}.
We assume the availability of standard digital signatures and a public-key infrastructure (PKI). We use \emsg{m}{}$_r$ to
denote a message $m$ signed by replica $r$. A message is well-formed if all of its signatures
are valid. 


Byzantine fault-tolerant state machine replication (BFT SMR) is formally defined as follows:
\begin{definition}
(BFT SMR). A Byzantine fault tolerant state machine
replication protocol commits client requests in a linearizable log, which satisfies the following 
properties~\cite{fab-good-case}~\cite{good-case}. 
\begin{itemize}
    \item Safety. Honest replicas commit the same  values at the same log position.
    \item Liveness. All client requests eventually receive a response; all requests are eventually committed by every honest replica.
\end{itemize}
\end{definition} 

We also formalize the notion of CLSO (inspired from~\cite{rotating-leader}).
\begin{definition} (CLSO).
A CLSO protocol is a BFT-SMR protocol that proceeds in a sequence of views and has three properties: 
\begin{itemize}
\item Each view changes the leader.
\item There is an infinite number of views led by honest leaders.
\item Block commitment cannot be guaranteed within a single view.


\end{itemize}


\end{definition}

\section{Related Work \& Liveness issues in CLSO protocols}
\label{sec:background}
All existing CLSO protocols follow the same general pattern. While we focus on HotStuff here~\cite{yin2019hotstuff}, our observations broadly apply to all current CSLO protocols.

Most such protocols follow a similar logical structure~\cite{jalalzai2020fasthotstuff,diem2021,buterin2017casper,buchman2016tendermint,yin2019hotstuff,jolteon-ditto}.  They proceed in a sequence of \textit{views}. 
In each view, a designated leader proposes a batch of client operations (a \textit{block}), and drives agreement to safely order and commit these operations.
Blocks contain a parent pointer to their predecessor block, thus forming a chain. At a high level, the protocol proceeds as follows. 

\par \textbf{Normal Case.} The leader of view $v$ begins by proposing a block $B$ for log slot $i$.
Committing a proposal consists of two logical phases:  
a \textit{non-equivocation} phase and a \textit{durability} phase. The non-equivocation phase ensures that at most one block proposal will be agreed on in per view. The durability phase ensures that any (possibly) agreed-upon decision is preserved across views and leader changes, thus guaranteeing that only one block can be committed for log slot $i$. Each phase makes use of \textit{quorum certificates} (QC) to achieved the desired invariants. A QC, written $QC=(B,v,\sigma)$, refers to a set of unique signed replica votes $\sigma$ for block $B$ proposed in view $v$. A QC describes a \textit{threshold} $|\sigma|$ of confirmations proving that a super-majority of distinct replicas voted for block $B$. 
Upon committing a block, honest replicas execute its transactions and enter the next view through a view change (described below).

\par \textbf{View Change.}
The \textit{view change} is responsible for changing leaders and preserving all decisions made durable by previous leaders.
View change protocols are notoriously tricky: they can be expensive and hard to get right~\cite{bano2020twins}. The primary challenge stems from reconciling different participants' beliefs about what \textit{could have been committed}, as asynchrony and malicious leaders may cause replicas to consider different sets of blocks as potentially committed.

To understand the liveness challenges associated with CLSO protocols, we first describe in more detail how stable non-chained (\textit{basic}) Hotstuff works (\S\ref{s:hs-basic}), before introducing the refinements of \textit{chaining}/\textit{pipelining} and \textit{leader-speaks-once} (\S\ref{s:hs-cslo}). We then demonstrate the resulting liveness pitfall (\S\ref{s:live-kaput}), and show that it is non-trivial to address (\S\ref{s:relax-take-it-eeeaaasy}). 

\subsection{Basic HotStuff}
\label{s:hs-basic}
HotStuff proceeds in a sequence of \textit{views}. 
%
 Hotstuff follows this pattern and consists of three voting rounds, one for the non-equivocation phase, and two for the persistence phase. In the \textit{Prepare round}, the leader proposes a block $A$ with view $v$ at position $i$ and each replica votes to prepare if it has not already prepared a block at $i$ with a higher view.  If the leader successfully obtains a QC ($n-f$ distinct replica votes) in the Prepare round, a \textit{prepareQC} forms, and the leader moves on to the Pre-Commit round. The existence of a prepareQC ensures agreement on $A$: no other block could have been certified for view $v$ as any two prepareQC must overlap in at least one honest replica. 
In the Pre-Commit round, the leader broadcasts the prepareQC; replicas vote to accept the prepareQC and locally update their highest prepareQC. The leader waits to receive $n-f$ Pre-Commit votes to form a precommitQC. 
In the Commit round, the leader  broadcasts a final \textit{precommitQC}. Replicas become \textit{locked} on this QC: they will never vote for a conflicting block unless they receive a prepareQC in a higher view.  The existence of a higher conflicting prepareQC is evidence that the locked QC could not have committed (honest replicas would not vote to support two conflicting blocks). Finally, the leader forms a commitQC upon receiving $n-f$ Commit votes. It attaches the commitQC in a Decide round to inform replicas that the block committed, at which point they can execute the block's operations and move to the next view. 
We remark that the Pre-Commit round in HotStuff is necessary only for liveness rather than safety (the \textit{hidden lock problem}~\cite{yin2019hotstuff}). 
Recent BFT protocols manage to avoid the Pre-Commit round of Hotstuff (thus achieving optimal two-round finality) by, respectively, eschewing linear world complexity \cite{jalalzai2020fasthotstuff, jolteon-ditto}, elongating view changes \cite{sui2022marlin}, or introducing novel cryptography \cite{giridharan2021no}.



\subsection{Chaining and LSO}
\label{s:hs-cslo}

\par \textbf{Chaining.} The aforementioned protocol takes three phases to commit a block: each operation thus requires forming three individual QCs. Prior work observes that, while each step serves a different purpose, all have identical structure: the leader proposes a block, collects votes and forms a QC. To amortize cryptographic costs and minimize latency,  one may pipeline commands such that a single QC simultaneously serves as  $prepareQC$, $precommitQC$, and $commitQC$ for different
blocks. The number of QCs in the chain indicates whether a block has been prepared or committed. 

\begin{figure}[t]
    \centering
    \includegraphics[scale=0.4]{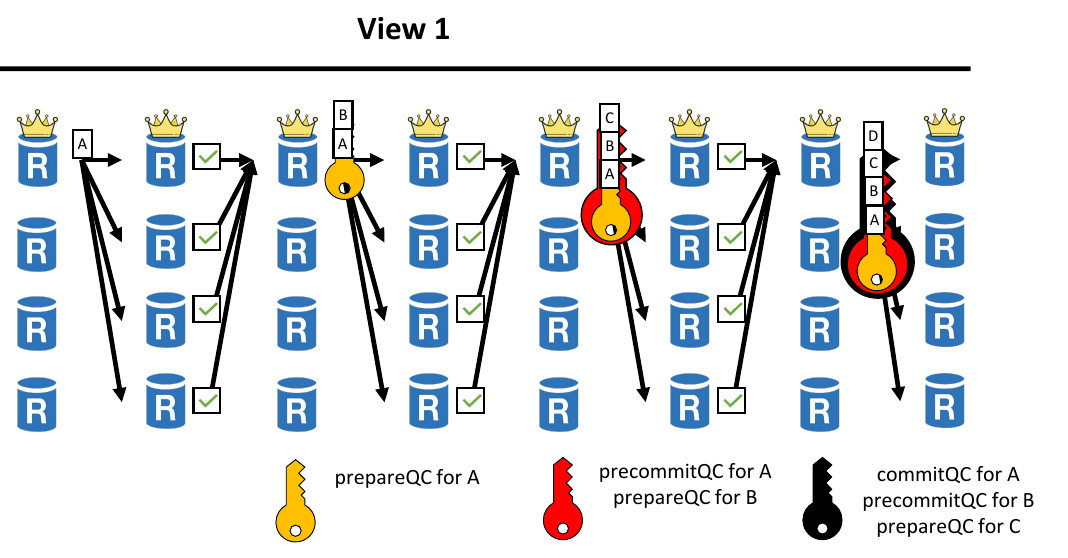}
    \caption{Chained Hotstuff \nc{If decide to remove previous figure, need to make sure we put legend on this graph}}
    \label{fig:chaining} 
\end{figure}

Consider for example a scenario in which a chained protocol is attempting to commit four blocks $A$, $B$,
$C$, and $D$ (Figure~\ref{fig:chaining}). The view leader first proposes $A$, collects $n-f$ votes for $A$, and forms its first QC ($QC_{A}$). $QC_{A}$ is the
\textit{prepareQC} for $A$. Next, the leader proposes $B$, indicating that (i) $A$ is the \textit{parent block} of $B$, and (ii) that $A$ has been certified by QC $QC_{A}$. It once again collects $n-f$ votes, forming $QC_{B}$. This QC acts as both the 
\textit{precommitQC} for $A$ and the \textit{prepareQC} for $B$. Similarly, the leader proposes and obtains a QC for block $C$. It marks $C$ as the parent block and uses $QC_{C}$ to attest to the validity of both $B$ and $A$ (implicitly through $B$'s backpointer). It forms a $QC_{C}$ which acts as a \textit{commitQC} for $A$, \textit{precommitQC} for $B$, and \textit{prepareQC} 
for $C$. As three \textit{consecutive} QCs now attest to $A$'s presence in the chain, $A$ is now committed. In a fourth step, the leader certifies $C$ using $QC_{C}$ and proposes $D$. Upon receiving this QC, replicas learn that $A$ has been committed and can thus safely execute operations in the block.

%

\par \textbf{LSO.} In the previous example, a single leader drives the full protocol (a \textit{stable leader}). It is responsible for deciding which block to include next in the chain, for collecting replica votes, for creating the corresponding QC and broadcasting it to all replicas. Many BFT protocols adopt this paradigm and do not rotate leaders in the absence of failures~\cite{aardvark,kotla10zyzzyva, castro1999pbft, gueta2019sbft} among others. This raises fairness concerns; malicious leader can censor operations, penalize specific users or influence operation ordering \cite{daian2019flash}. A plethora of recent work addresses order-related fairness issues in BFT systems \cite{zhang2020, kelkar2020order, kelkar2021themis, Bullshark, suri2021basil, danezis2022narwhal, malkhi2022maximal, heimbach2022sok}.
While fairness concerns in chained BFT protocols can be mitigated by carefully optimizing the leader-selection process using leader reputation schemes like Carousel~\cite{carousel}, the problem cannot be fully side-stepped.
In the \textit{leader-speak-once} (LSO) model, each view lasts for the duration of a single protocol phase, thus minimizing each leader's influence on new proposals. A leader receives votes, forms a QC, proposes the next block, and is immediately rotated out. Upon receiving the block proposal, replicas directly increment their view and send their votes to the \textit{next} leader in the rotation. 


\subsection{Liveness Concerns}
\label{s:live-kaput}

As we saw in Section~\ref{s:hs-basic}, Hotstuff commits a block once three consecutive QC attest to the block's validity. As such, it requires a sequence of four consecutive leaders (three to form a QC and a fourth to disseminate the final QC). While other CLSO protocols reduce the number of consecutive QCs to two~\cite{diem,jalalzai2020fasthotstuff}, all require that these QCs form in consecutive views. Committing a block thus requires a sequence of $k=3$ or $k=4$ honest leaders. Unfortunately, we find that requiring consecutive honest leaders introduces a significant liveness issue where blocks can be prevented from committing for long periods of time. All existing CLSO protocols suffer from this limitation. 
\begin{figure}[h!]
\begin{minipage}{.45\textwidth}
    \centering
    \includegraphics[scale=0.4]{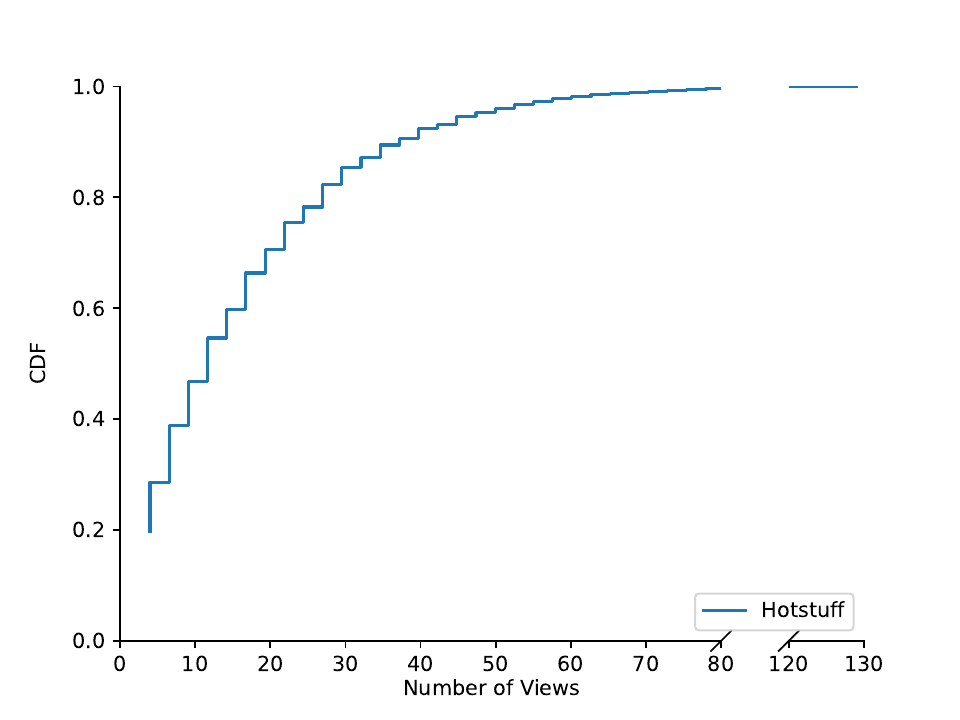}
    \caption{\# number of views needed for commit}
    \label{fig:faulty} 
    \end{minipage}
     \begin{minipage}{.45\textwidth}
    \centering
    \includegraphics[scale=0.4]{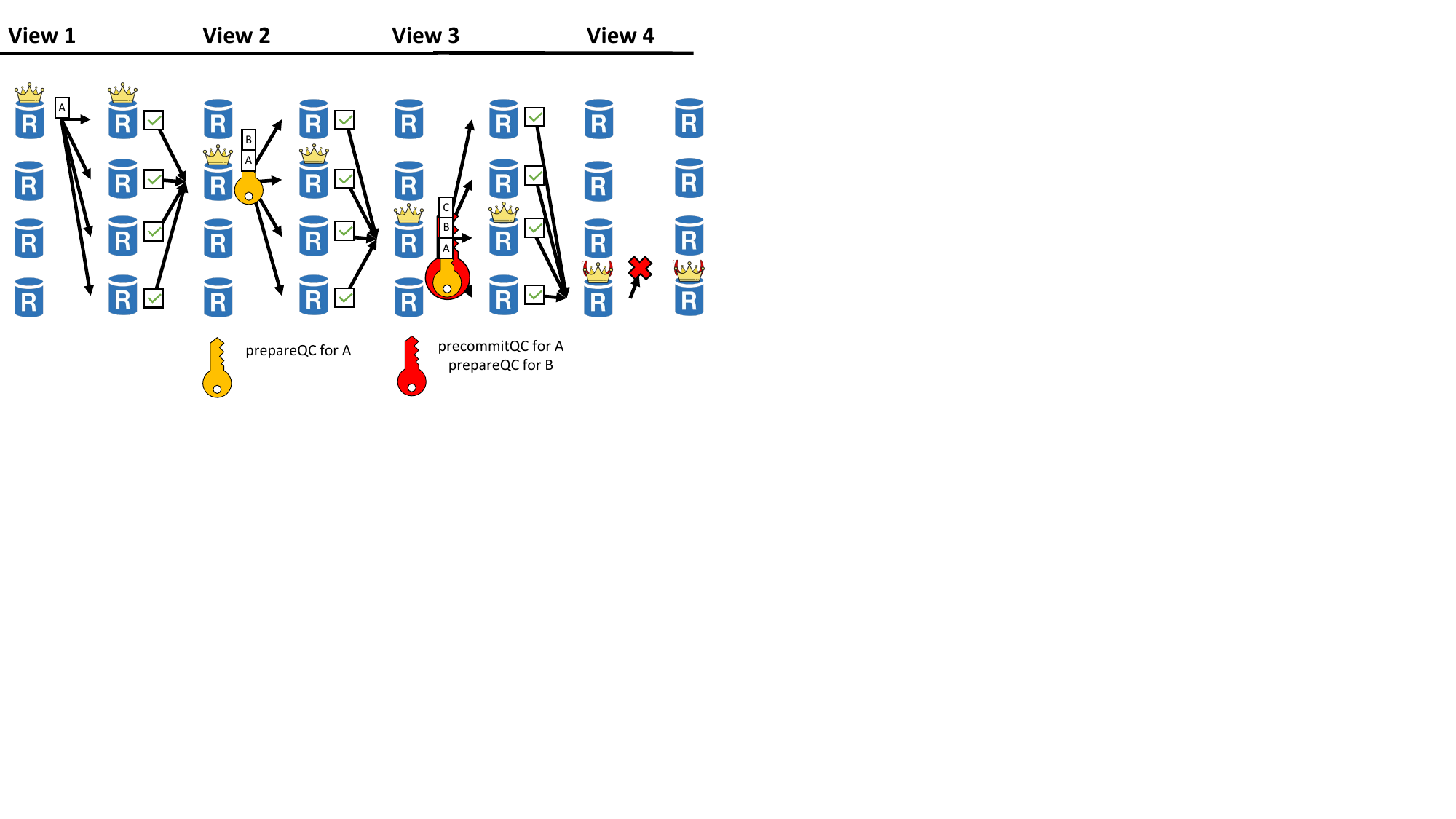}
    \caption{Liveness issue with CLSO protocols}
    \label{fig:liveness}
    \end{minipage}
\end{figure}

Consider replicas $R_1$, $R_2$, $R_3$ and $R_4$, with $R_4$ being faulty (Figure~\ref{fig:liveness}).
Leaders are elected round-robin.
If leader $R_1$ initially proposes block $A$, $R_4$ might never broadcast the final $QC$ (acting as \textit{commitQC}) necessary for replicas to execute $A$. Likewise, if $R_2$ and $R_3$ propose blocks $B$ and $C$ respectively, $R_4$ may fail to form the corresponding \textit{precommitQC} (for $B$) and \textit{prepareQC}(for $C$). 
In fact, with four replicas, a single faulty leader can prevent Hotstuff from committing \textit{any} block! More generally, requiring $n$ consecutive leaders
can create significant latency spikes, even for large participant sets, as blocks must be re-proposed until they find a sequence of n consecutive leaders. We measure this danger through simulation in Figure~\ref{fig:faulty}, where we calculate the number of phases necessary to commit a block given a random assignment of faulty nodes and random leader election policy. In our experiment, Hotstuff requires an average of 12 phases to commit (a three-fold increase over the failure-free case), and has an observed worst-case latency of 129 rounds. We note that, in the absolute worst case, Hotstuff (or any CLSO protocol with $k=4$) may \textit{never} commit a block.

\subsection{It's not easy to strengthen liveness}
\label{s:relax-take-it-eeeaaasy}
Requiring $k+1$ consecutive honest leaders to guarantee block commitment stems from a CLSO's protocol need to commit a block if and only if $k$ \textit{consecutive} QCs certify it.
Weakening this requirement leads to safety violations ~\cite{bano2020twins}). To illustrate, let us assume that any two (possibly non-contiguous) QCs certifying
block $B$ suffice to commit $B$ (we adopt the more efficient "two-QC" rule of Jolteon~\cite{jolteon-ditto} or Fast Hotstuff~\cite{jalalzai2020fasthotstuff} for simplicity, but the same reasoning holds for Hotstuff's three-consecutive QC rule). We assume, as in HotStuff, that replicas vote to certify a block as long as they are not locked on a higher conflicting block. This is true for all voting in our example. We further assume that leaders are elected round-robin where $R_i$ is leader for view $v$ where $v \pmod i==0$. We can easily show that temporary periods of asynchrony will lead conflicting blocks to commit.
\begin{itemize}[leftmargin=*]
    \item \ul{\textit{Views 1-2}}. $R_1$ proposes $A_1$. A QC forms
    for $A_1$ at $R_2$. $R_2$'s broadcast messages are delayed. Asynchrony leads to a view change.
\item \ul{\textit{View 3-5}}. $R_3$ receives responses from all replicas expect for $R_2$. All responses are empty (recall that replicas only include QCs in view changes, not votes). $R_3$ proposes $B_1$. A $QC$ forms for $B_1$ at $R_4$. $R_4$'s broadcast messages are delayed. Asynchrony leads to view change.
\item \ul{\textit{View 6-8}}. $R_1$ receives responses from all replicas expect for $R_4$ and learns about $QC_{A_1}$. It proposes $A_2$, which extends $A_1$. A QC forms at $R_2$. $R_2$ sees that two QCs certify $A_1$ and thus commits $A_1$. Asynchrony leads to a view change.
\item \ul{\textit{View 9-11}}. $R_3$ receives responses from all replicas expect for $R_2$ and learns about $QC_{B_1}$. It proposes $B_2$, which extends $B_1$. A QC forms at $R_4$. $R_4$ sees that two QCs certify $B_1$ and thus commits $B_1$. 
\end{itemize}
$R_3$ and $R_2$ have committed conflicting blocks, violating safety. 
The root cause here is simple: committing a proposal after observing QCs in non-contiguous views is dangerous because there may exist a higher conflicting QC.
 In contrast, requiring QCs to be in contiguous views ensures that, for any committing proposal $p$, a QC that extends p will be preserved across view changes: 
 Since at least two QC's are required to commit $p$ (three for HotStuff), existence of the latest $QC_{lat}$ implies that at least $f+1$ honest replicas have observed (at least) the preceding $QC_{pre}$. Since $QC_{pre}$ (by assumption of being contiguous) has the highest view (bar $Q_{lat}$), it follows that every view change (a quorum of $n-f$) must observe $QC_{pre}$ at least once. Thus, all future proposals must extend $p$.

This paper asks: can we strengthen liveness (\nonconsec{}) without violating safety? 
We answer in the affirmative. We introduce \sys{}, CLSO protocol that, after GST, will commit all blocks proposed by an honest leader in view $v$ after at most two honest (possibly non-contiguous) views $v'$ and $v''$.
\section{\sys{}}\label{sec:protocol}
 \sys{} achieves optimal phase complexity (two phases to commit a block), quadratic word complexity with threshold signatures, linear word complexity with SNARKs, and responsiveness with consecutive honest leaders. Specifically \sys{} satisfies the following property. 

\begin{theorem} (\nonconsec{}). \label{siesta-main}
After GST, if an honest leader in view $v$ proposes a value, then it is guaranteed to commit if views $v_{i_1}<v_{i_2}<...<v_{i_k}$ have honest leaders (non-consecutive).
\end{theorem}

This property has consequences for both safety and liveness. For safety, \sys{} must ensure that, in the presence of an honest leader $L_1$ proposing $B$, the existence of two QCs in non-contiguous views for $B$ be sufficient to guarantee that no conflicting block $B$' can commit. This theorem also places stringent liveness requirements on \sys{}: after GST, all blocks proposed by an honest leader \textit{must} be committed. Note that this is not a property that is traditionally guaranteed by CLSO protocols. In the rest of this section, we first describe the core intuition behind \sys{} before describing the protocol in more detail.


\subsection{\sys{} Overview}

\par \textbf{The case of the conflicting QC.} 
As shown in \S\ref{s:relax-take-it-eeeaaasy}, committing a block requires ensuring that no higher conflicting QC could have formed. Committing with consecutive QCs 
guarantees precisely that (\S\ref{s:relax-take-it-eeeaaasy})..
To satisfy \nonconsec{}, a protocol must instead commit blocks even when the QCs certifying them are not consecutive. We thus require alternative mechanisms to prevent conflicting QCs from forming. In an ideal world, one would design a clever algorithm that prevents \textit{all} such QCs from forming. Unfortunately, this is impossible~\cite{good-case}. Instead, \sys{} proceeds in two ways: under asynchrony and omission faults, \sys{} does indeed \textit{preclude} all conflicting QCs. In the presence of equivocation, \sys{} instead reliably \textit{detects} when a conflicting QC could have formed  and immediately aborts committing this block. Together, these mechanisms ensure that, after GST, all blocks proposed by honest leaders will eventually commit.

\par \textbf{Key Idea}. \sys's key insight is simple. It explicitly makes use of information that all other CLSO protocols (and most other BFT protocols) traditionally discard after processing. 
\textit{\phaseone{}} messages (\textit{Pre-prepare} messages in PBFT)\footnote{to avoid naming conflicts, we will refer to these messages as \textit{Prepare} messages}. These messages have until now only been used to achieve an optimistic fast 
path (Zyzzyva~\cite{kotla10zyzzyva} and SBFT~\cite{gueta2019sbft}) in which a superquorum containing all $n$ replicas 
informs the client that the block has been persisted. In non-fast path protocols, these messages were thought to convey no useful information as they precede the protocol's non-equivocation phase, and thus hold no bearing to maintaining \textit{safety}.
\sys{} instead uses them to improve \textit{liveness} by reliably distinguishing between asynchrony/omission faults and equivocation. 
\par \textbf{Technical Intuition}. By quorum intersection, if a QC forms for a block $B$, all subsequent view change leaders will receive at least one \textit{Prepare} message for $B$. By extending this block, subsequent leaders will never create conflicting blocks. If two conflicting QCs do form as a result of equivocation, subsequent leaders will necessarily observe the existence of two \textit{Prepare} messages {in the same view}, and thus abort block commitment. These conflicting messages further create explicit evidence of misbehavior, allowing the faulty leader to be removed.

While the above approach elegantly guarantees safety without requiring consecutive QCs, it does not yet fully satisfy Theorem \ref{siesta-main}. 
For example, consider the following scenario with a sequence of five leaders: $L_1$ (honest), $L_2$ (faulty), $L_3$ (honest), $L_4$ (faulty), and $L_5$ (honest). $L_1$ could propose a block $B$ (after GST), all honest replicas vote for it, implicitly forming a QC for $B$.  $L_2$ could, however, fail to assemble and disseminate this QC. $L_3$ would not observe a QC for $B$ and instead propose a new block $B'$ that extends $B$. Similarly, $L_4$ would fail to generate and disseminate a QC, and $L_5$ would fail to observe a QC for $B'$. To satisfy Theorem \ref{siesta-main} we must commit $B$ since there were three honest leaders; however, in this scenario we fail to do so. To address this issue, \sys{} develops a novel technique, QC materialization, that makes these implicit QCs explicit, allowing replicas to commit the relevant blocks. QC materialization hinges on two observations: after GST, if an honest leader broadcasts a \textit{Prepare} message, all honest replicas will receive it and send a reply. Second, an honest leader is guaranteed to receive replies from all honest replicas in $\Delta$ time (after GST).

\par \textbf{Protocol Structure.}\label{subsec:datastructures}
\sys{} shares the same structure as other CLSO protocols. It consists of four components: a fast view change, a slow view change, a commit procedure, and a view synchronizer. Fast view changes occur in the absence of delayed messages or failures. Slow view changes are triggered by lack of progress (view timeouts). For each view, the leader runs a commit procedure to determine which blocks in the chain can safely be committed. The view synchronizer ensures all honest replicas remain in the same view for sufficient amount of time.  \sys{} adopts the same view synchronizer as prior work~\cite{naor2020cogsworth,bravo2020making}; we focus on the other components here.

\subsection{\sys{} Data Structures}
%
\par \textbf{Blocks and Block Format.} As is standard, \sys{} batches client requests into \textit{blocks}, with each block containing
a hash pointer to its parent block (or to null in the case of the \textit{genesis} block). A block's position in the
chain is known as its height. A block $B \coloneqq \langle v, p, QC, b, NV_{set} \rangle$ contains the following information: $v$, the view the block was
proposed in; $p$, its parent block; $QC$, the quorum certificate certifying an
ancestor 
block (more details later); and $b$, a batch of client transactions (Alg. \ref{alg:prelim} 
lines \ref{prelim:block:begin}-\ref{prelim:block:end}). Additionally, blocks proposed in the slow view change
must contain $NV_{set}$ (Alg. \ref{alg:prelim} line \ref{prelim:slow}), the set of \textsc{New-view} messages (more 
detail later). A block $B$ is
valid if 1) its parent block is valid (or $B$ is genesis)
2) all included client transactions $b$
satisfy all application level validity predicates, and 3) all included signatures are valid. Honest replicas only accept valid blocks -- we omit validation checks from the pseudocode.

\par \textbf{Block extension and conflicts}. Parent pointers link blocks into a chain. We define a block ancestor of $B$ to be any block $B_{anc}$ for which a path (of parent links) exists from $B$ to $B_{anc}$.
We say
a block $B'$ extends (or is descendant of) a block $B$ ($B \longleftarrow B'$) if $B$ is an ancestor of $B'$. 
We say that $B'$ conflicts with $B$ if neither extends the other ($\neg (B \longleftarrow B' \lor B' \longleftarrow B)$).
Informally, if $B'$ conflicts with $B$, these blocks are on separate forks of the chain and only one of them can commit. By convention, we say that blocks extend themselves. 

\par \textbf{Equivocation.} Honest leaders may propose only a single block per view. We label conflicting blocks with the same view as \textit{equivocating}. An equivocation proof (more details later) constitutes evidence of leader equivocation.

\par \textbf{Message Types.} In \sys{} there are three types of messages: \textsc{Vote-req}, \textsc{Vote-resp}, and \textsc{New-view}. \msg{Vote-req}{B} messages are the \textit{Proposal} messages in \sys{} and contain $B$, the leader's proposed block. Replicas send \msg{Vote-resp}{B} messages to vote on a proposal for $B$.
Since blocks are
chained together, a \textsc{Vote-resp} message for a block counts also as \textsc{Vote-resp} for all of its ancestors. 
Each replica stores its current view, $v_r$, the latest accepted (highest view) \textsc{Vote-req}, $vr_r$, and \textsc{Vote-resp}, $vp_r$ (Alg \ref{alg:normal-case}. lines \ref{fast-local-begin}-\ref{fast-local-end}). 
\msg{New-view}{v,vr_r,vp_r} messages are used by the slow view change to maintain progress despite failures or asynchrony. They contain the view $v$ that the replica is advancing to, the replica's latest \textsc{Vote-req}, $vr_r$, and its latest \textsc{Vote-resp} $vp_r$. 


\par \textbf{Quorum Certificates}. A $QC \coloneqq \langle Q, B \rangle$ consists of a set $Q \coloneqq n-f$ \textsc{Vote-resp} messages and a \textit{certified} block $B$.
We say a block $B$ is \textit{certified} if there exists a quorum $Q$ of $n-f$ \textsc{Vote-resp} messages for $B$ itself (direct) or a descendant block $B'$ (indirect). 
Given a set of any $n-f$ \textsc{Vote-resp} messages, we can thus determine which block was certified by identifying the highest (w.r.t view) common ancestor (Alg. \ref{alg:prelim} line \ref{prelim:anc}).
A QC contains $Q$ and $B \coloneqq B_{anc}$, the highest block that $Q$ certifies.
(Alg. \ref{alg:prelim} lines \ref{prelim:qcblock}-\ref{prelim:qcvotes}).
We say that two QCs conflict if they certify blocks that conflict. 
In the rest of the paper, we, 
denote a QC as \textit{implicit} as soon as the necessary \textsc{Vote-resp}s to form $Q$ are cast, but the QC has not yet been assembled. A leader materializes an implicit QC into an explicit QC by assembling the necessary votes.


\par \textbf{Ranking.} We introduce a notion of ranking rules for both blocks and QCs. 
Blocks with higher views have higher ranks; ties are broken by the rank of their contained QCs (Alg. \ref{alg:prelim} line \ref{prelim:blockrank}). These rules are used to determine whether a block is safe to accept in the slow view change.

\begin{algorithm}

\begin{algorithmic}[1]
\Procedure{CreateBlock}{$v,B_p,QC,NV_{set}$}
\State{$B.v\gets v$}\Comment{The view the block $B$ is proposed in}\label{prelim:block:begin}
\State{$B.p\gets B_p$}\Comment{The parent block $B$ extends}
\State{$B.QC_{anc}\gets QC$}\Comment{The QC for an ancestor block that $B$ contains}
\State{$B.b\gets$ a batch of client transactions}\label{prelim:block:end}
\State{$B.NV_{set}\gets NV_{set}$}\Comment{The set of \textsc{NewView} messages for slow blocks}\label{prelim:slow}
\State{\Return $B$}
\EndProcedure

\Procedure{BlockRank}{$B_1,B_2$}\Comment{Blocks are ranked by view, ties are broken by the higher QC}
\State{\Return $B_1.v \geq B_2.v \land \Call{QCRank}{B_1.QC_{anc},B_2.QC_{anc}}$}\label{prelim:blockrank}
\EndProcedure

\Procedure{CreateQC}{$Q$}\Comment{$Q \coloneqq n-f$ \textsc{Vote-resp} messages}
\State{$B_{anc} \gets $ highest common ancestor block of $Q$}\label{prelim:anc}
\State{$QC.B\gets B_{anc}$}\Comment{The block the QC certifies}\label{prelim:qcblock}
\State{$QC.Q\gets Q$}\label{prelim:qcvotes}
\State{\Return $QC$}
\EndProcedure

\Procedure{QCRank}{$QC_1,QC_2$}\Comment{QCs are ranked by view of the block they certify}
\State{\Return $QC_1.B.v>QC_2.B.v$} 
\EndProcedure

\end{algorithmic}

\caption{Data Structure Utilities}\label{alg:prelim}
\end{algorithm}

\subsection{Protocol Details}

\textbf{Fast View Change (FVC) .} We first focus on the steady state. Successive leaders transmit state through a fast view change.  The structure of \sys{}'s is identical to existing CLSO protocols in that there are two steps: 1) the leader proposes a valid block to all
replicas (sending step) and 2) replicas accept the block and forward their vote to the leader of the next view (receiving step).
\par \ul{\textit{Sending step.}}The leader of view $v_r+1$ 
forms a valid QC for a block $B$ in view $v_r$ when it receives receives $n-f$ matching votes for $B$ (Alg. \ref{alg:normal-case} lines \ref{fast-qc-start}-\ref{fast-qc-end}).
As the views are contiguous, proposing a block that extends $B$ will not result in a conflicting QC.
The leader can then safely propose a new block $B'$, which extends $B$ (Alg. \ref{alg:normal-case} lines \ref{fast-propose-start}-\ref{fast-propose-end}).
\par \ul{\textit{Receiving step.}}
Replicas deem a proposal for $B'$ valid
if the associated QC is for $v_{r}$ (i.e. contiguous), and $B'$ extends $B$ (Alg. \ref{alg:normal-case} line \ref{fast-receive}). 
It updates its current view $v_r \coloneqq v_r + 1$, its latest received $vr_r \coloneqq $ \msg{Vote-req}{B'} 
and its latest sent  $vp_r \coloneqq $\msg{Vote-resp}{B'}, 
indicating its support for block $B'$ (Alg. \ref{alg:normal-case} lines \ref{fast-update1-begin}-\ref{fast-update1-end}). It then sends $vp_r$ to the leader of the next view ($v_r +1$) (Alg. \ref{alg:normal-case} line \ref{fast-send}). 

\begin{algorithm}
\begin{algorithmic}[1]
\State{$v_r\gets 1$}\Comment{Current view of replica $r$}\label{fast-local-begin}
\State{$vr_r\gets \bot$}\Comment{Latest \textsc{Vote-req} received}
\State{$vp_r\gets \bot$}\Comment{Latest \textsc{Vote-resp} sent}\label{fast-local-end}

\State{$//$ \textit{only the leader of view $v_r+1$}}
\Event{receiving $S\gets 2f+1$ \textit{matching} \msg{Vote-resp}{B} messages while in view $v_r$}\label{fast-qc-start}
\State{$QC_B\gets \Call{CreateQC}{S}$}\Comment{Certify $B$ since it has $2f+1$ votes}\label{fast-qc-end}
\State{$B'\gets \Call{CreateBlock}{v_r+1,B,QC_B,\bot}$}\Comment{Propose $B'$ which has $B$ as its parent}\label{fast-propose-start}
\State{\textbf{send} \msg{Vote-req}{B'} to all}\label{fast-propose-end}
\EndEvent

\Event{receiving a valid \msg{Vote-req}{B'} from $L_{v_r+1}$}
\State{$QC_B\gets B'.QC_{anc}$}
\State{$B\gets QC_B.B$}\Comment{Gets the certified block that $B'$ extends}
\State{$//$ \textit{in the normal case the QC will be from the previous view}}
\If{$B'.NV_{set}=\bot \land B.v+1=B'.v \land B=B'.p$}\label{fast-receive}
\State{$v_r\gets B'.v$}\Comment{Move to the next view}\label{fast-update1-begin}
\State{$vr_r\gets$ \msg{Vote-req}{B'}}\Comment{Update latest \textsc{Vote-req}}
\State{$vp_r\gets$ \msg{Vote-resp}{B'}}\Comment{Update latest \textsc{Vote-resp}}\label{fast-update1-end}
\State{\textbf{send} $vp_r$ to $L_{v_r+1}$}\Comment{Send vote to the next leader}\label{fast-send}
\EndIf
\EndEvent
\end{algorithmic}

\caption{Fast View Change (Steady State)}\label{alg:normal-case}
\end{algorithm}

\textbf{Slow View Change (SVC).} The FVC in \sys{} is simple: as views are contiguous, the new leader is guaranteed to see the latest possible QC. It can then easily extend the chain without any risking of a conflicting QC forming. There is no such continuity in the slow view change, which requires more care. The SVC has two main objectives: 1) maintain consistency \textit{across} views and 2) continue making progress on honest proposed blocks.
\par \ul{\textit{Local State.}} For the slow view change, each replica maintains a view timer that resets every time it advances to a new view. This timer is used to detect a lack of progress in a view. The leader of the new view additionally maintains 1) $NV_{set}$, the set of \textsc{New-view} messages received, 2) $B_{parent}$, the parent block of the new leader's next proposal, 3) $QC_{anc}$, the highest ranked explicit QC that $B$ extends, and 4) a materialization timer (Alg. \ref{alg:siesta-vc} lines \ref{local-begin}-\ref{local-end}) (more detail follows).
\par \ul{\textit{Trigger Conditions.}} A slow view change is triggered when enough replicas fail to make progress in the current view (when their view timer expires). A replica then indicates that it wants to change views by sending the next leader a \textsc{New-view} message containing its relevant local state (Alg. \ref{alg:siesta-vc} lines \ref{trigger-begin}-\ref{trigger-end}). When the new leader receives a \msg{New-view}{v,vr_r,vp_r} message, it adds it to the set of \textsc{New-view} messages received so far for the view $v$ (Alg. \ref{alg:siesta-vc} lines \ref{newview-begin}-\ref{newview-end}). A slow view change is  triggered when sufficiently many ($n-f$) \textsc{New-view} messages have been received (Alg. \ref{alg:siesta-vc} line \ref{quorumreached}).

\par \ul{\textit{Parent Block Selection.}} The new leader first selects a parent block $B_{parent}$ to extend. Recall that in \sys{}, unlike in other CLSO protocols, \textsc{New-view} messages include a replicas' last seen \textsc{Vote-req} message. The leader then always selects the highest ranked block among these \textsc{Vote-req} messages (Alg. \ref{alg:siesta-vc} line \ref{highrank}). 
In doing so, the leader guarantees that it always extends the latest block for which a QC \textit{could} have formed (but that the leader did not necessarily receive). By the quorum intersection property, if forming a QC requires at least $n-f$ replicas receiving the corresponding \textsc{Vote-req} messages, at least one of these messages would have been included in the $n-f$ \textsc{New-view} messages.
In the absence of explicit equivocation, using \textsc{Vote-req} messages in this way precludes the leader from extending a block that conflicts with a QC in a higher view.
If a previous leader does equivocate, there may exist \textsc{Vote-req} messages for equivocating blocks that have the same (highest) rank. The leader will pick one of these equivocating blocks to extend arbitrarily, which may result in the formation of a conflicting QC.
We discuss how \sys{} safely handles this scenario in the commit rule.

\par \ul{\textit{Implicit QC Materialization}.} 
Next, the new view leader must ensure that, after GST, any block proposed by an honest leader will eventually be committed. \sys{} leverages \textsc{Vote-req} messages to enforce this invariant through a novel QC materialization technique. \sys{} makes three observations: 1) after GST, all honest replicas are guaranteed to vote in favor of an honest leader's proposal. 2) after GST, the next leader is guaranteed to receive responses from all honest nodes within its timeout $\Delta$ 3) before GST, there is no requirement to eventually commit blocks proposed by honest leaders. It follows that, after GST, if an honest leader proposed a block $B$, an implicit QC formed and subsequent leaders will necessarily receive $n-f$ \msg{Vote-resp}{B} (or descendants of $B$). As such, any time a leader sees $n-f$ \msg{Vote-resp}{B'} messages for some block $B'$, such that $B \longleftarrow B'$, it could have been proposed by an honest leader and must therefore be certified. Note that \sys{} enforces this guarantee for liveness, not safety. Before GST, honest leaders' proposals may - as is the case in existing CLSO protocols -- fail to generate a QC.

The leader first identifies the highest ranked $QC_{anc}$
(Alg. \ref{alg:siesta-vc} line \ref{qcrank}) on the chain that certifies an ancestor of $B_{parent}$, just as one would in traditional CLSO protocols. Next, the leader tries to materialize any higher ranked implicit QCs on the chain. If there are enough \textsc{Vote-resp} messages to materialize a QC for $B_{parent}$ (Alg. \ref{alg:siesta-vc} lines \ref{cancel-mat-begin}-\ref{cancel-mat-end}), the leader materializes this QC and directly proposes a new block that extends $B_{parent}$. This is safe as $B_{parent}$ is (by block selection) necessarily the highest ranked block on the chain. If, instead, there are insufficient \textsc{Vote-resp} messages, the leader starts a materialization timer during which it waits for additional \textsc{New-view} messages in order to materialize implicit QCs for descendants of $QC_{anc}.B$.
In line with our aforementioned observations, the materialization timer must be greater than or equal to $\Delta$~\cite{jolteon-ditto} in order to guarantee that, after GST, the messages of honest nodes will all be received. If the leader eventually receives $n-f$ \msg{Vote-resp}{B_{desc}} where $B_{desc}$ is a descendant of $QC_{anc}.B$, $B_{desc}$ could have been proposed by an honest node. The leader thus materializes an explicit QC for $B_{desc}$ and updates its local knowledge of the highest ranked known $QC_{anc}$ (Alg. \ref{alg:siesta-vc} lines \ref{check-update}-\ref{update-qc}). If, while waiting, the leader receives $n-f$ \msg{Vote-resp}{B_{parent}}, the leader instead updates $QC_{anc}$ certify $B_{parent}$, and the materialization timer can be canceled. 
Finally, thes leader proposes a new block $B$ with parent block $B.p \coloneqq B_{parent}$, QC $B.QC \coloneqq QC_{anc}$,
and $B.NV_{set} \coloneqq NV_{set}[v]$ the set 
of \textsc{New-view} messages received (Alg. \ref{alg:siesta-vc} lines \ref{propose1-start1}-\ref{propose1-end1}, \ref{propose2-start2}-\ref{propose2-end2}). 

\par \ul{\textit{View Change Validation.}} When a replica receives a valid \msg{vote-req}{B'} proposal from the leader, the replica checks that the leader did in fact perform the view change correctly (Alg. \ref{alg:siesta-vc} line \ref{verify-correct-vc}). It confirms that 1) the leader obtained $n-f$ \textsc{New-view} messages 2) that the proposed block's parent was in fact the highest ranked blocks among \textsc{Vote-req} messages 3) that the proposal extends the highest QC received by the leader. When confirmed, the replica updates its state (Alg. \ref{alg:siesta-vc} lines \ref{state-update-begin}-\ref{state-update-end}), resets its view timer (Alg. \ref{alg:siesta-vc} line \ref{reset}) and sends a \textsc{Vote-resp} to the next leader (Alg. \ref{alg:siesta-vc} line \ref{send-vote}).

\begin{algorithm}
\begin{algorithmic}[1]
\State{view timer $\gets 5\Delta$ delay}\Comment{Set view timer delay}\label{local-begin}
\State{materialization timer $\gets \Delta$ delay}\Comment{Set materialization timer delay}
\State{$NV_{set} \gets \{\}$}\Comment{Stores \textsc{New-view messages}}
\State{$B_{parent} \gets \bot$}\Comment{Highest ranked \textsc{Vote-req} block}
\State{$QC_{anc} \gets \bot$}\Comment{The highest ranked explicit QC}
\State{$B_{anc} \gets \bot$}\Comment{The block $QC_{anc}$ certifies}\label{local-end}

\Event{view timer for $v_r$ expiring}\label{trigger-begin}
\State{\textbf{send} \msg{New-view}{v_r+1,vr_r,vp_r} to $L_{v_r+1}$}\label{trigger-end}
\EndEvent

\Event{receiving \msg{New-view}{v,vr_r,vp_r} for view $v>v_r$}\label{newview-begin}
\State{$NV_{set}[v] \gets NV_{set}[v] \cup $ \msg{New-view}{v,vr_r,vp_r}}\label{newview-end}
\If{$|NV_{set}[v]| = n-f$}\Comment{Slow view change triggered}\label{quorumreached}
\State{$B_{parent}\gets \Call{HighVoteReq}{S}$}\Comment{Finds the highest ranked block to extend}\label{highrank}
\State{$QC_{anc}\gets B.QC_{anc}$}\Comment{Gets the QC contained within $B$}\label{qcrank}
\State{$B_{anc}\gets QC_{anc}.B$}
\State{\textbf{start} materialization timer}\label{mattimer-start}
\EndIf
\State{\textit{// while waiting for materialization timer, continually check to see if $QC_{anc}$ can be updated}}
\If{$NV_{set}[v]$ \textit{contains} $n-f$ \msg{Vote-resp}{B_{desc}} where $B_{desc}$ \textbf{extends} $B_{anc}$}\label{check-update}
\State{$QC_{anc}\gets \Call{CreateQC}{NV_{set}[v]}$} \Comment{Update $QC_{anc}$ to be newly formed QC}\label{update-qc}
\State{$B_{anc}\gets QC_{anc}.B$}
\EndIf
\If{$B_{anc}=B_{parent}$}\Comment{$QC_{anc}$ has highest possible rank, propose a new block}\label{cancel-mat-begin}
\State{\textbf{cancel} materialization timer}\label{cancel-mat-end}
\State{$B'\gets \Call{CreateBlock}{v,B_{parent},QC_{anc},NV_{set}[v]}$}\Comment{Propose $B'$}\label{propose1-start1}
\State{\textbf{send} \msg{Vote-req}{B'} to all}\label{propose1-end1}
\EndIf
\EndEvent

\Event{materialization timer expiring}\label{expire}
\State{$B'\gets \Call{CreateBlock}{v,B_{parent},QC_{anc},NV_{set}[v]}$}\Comment{Propose $B'$, with $B$ as its parent.}\label{propose2-start2}
\State{\textbf{send} \msg{Vote-req}{B'} to all}\label{propose2-end2}
\EndEvent



\Event{receiving a valid \msg{Vote-req}{B'} from $L_{v_r+1}$}
\State{$QC_B\gets B'.QC_{anc}$}
\State{$B\gets QC_B.B$}
\State{$//$ \textit{verify that $L_{v_r+1}$ did the view change correctly}}
\If{$|B'.NV_{set}|\geq 2f+1 \land \Call{HighVoteReq}{B'.NV_{set}}=B'.p \land B'$ \textit{extends} $B$}\label{verify-correct-vc}
\State{$v_r\gets B'.v$}\Comment{Move to the next view}\label{state-update-begin}
\State{$vr_r\gets$ \msg{Vote-req}{B'}}\Comment{Update latest \textsc{Vote-req}}
\State{$vp_r\gets$ \msg{Vote-resp}{B'}}\Comment{Update latest \textsc{Vote-resp}}\label{state-update-end}
\State{\textbf{reset} view timer}\label{reset}
\State{\textbf{send} $vp_r$ to $L_{v_r+1}$}\Comment{Send vote to the next leader}\label{send-vote}
\EndIf
\EndEvent

\Procedure{HighVoteReq}{$NV_{set}$}
\State{$B_{high}\gets \bot$}
\For{$s \in S$}\Comment{Iterate through all \textsc{Vote-req}s in the view change}
\State{\textbf{parse} $s$ as \msg{New-view}{v_r,vr_r,vp_r}}
\State{\textbf{parse} $vr_r$ as \msg{Vote-req}{B}}
\If{$\Call{BlockRank}{B,B_{high}}$}
\State{$B_{high}\gets B$}\Comment{Update the highest ranked block}
\EndIf
\EndFor
\State{\Return $B_{high}$}\Comment{Return highest ranked block in the view change}
\EndProcedure
\end{algorithmic}

\caption{Slow View Change}\label{alg:siesta-vc}
\end{algorithm}


\textbf{Commit Rule.} 
The commit rule determines which blocks in the chain can be safely marked as committed; it is invoked each time a replica receives a valid \msg{Vote-req}{B'} message from the leader. The test considers the last two QCs and their associated blocks (Alg. \ref{alg:commit} lines \ref{gather-begin}-\ref{gather-end}). Informally, a block is safe to commit when no possible conflicting block can also be committed, in other words when no conflicting QC could have formed. More specifically, the commit test considers two cases. We write $QC_{child}$ and $QC_{parent}$ to denote the respectively the last and second to last QCs in the chain.\\
\ul{\textit{Consecutive QCs.}} If the blocks certified by $QC_{parent}$ ($B_{parent}$) and $QC_{child}$ ($B_{child}$) were proposed in consecutive views (Alg. \ref{alg:commit} line \ref{check-consec}), it safe to commit $B_{parent}$. As shown in \S\ref{s:relax-take-it-eeeaaasy}, no higher ranked (than $B_{parent}$) conflicting QC will form.\\
\ul{\textit{Non-consecutive QCs.}} The use of \phaseone{} messages precludes conflicting QCs from forming in the presence of omission faults or asynchrony. It does not, however, prevent conflicting QCs from forming when the leader equivocates. Thus, the first step is to identify whether a conflicting QC could have formed as a result of equivocation. 
This is done by iterating through all of the ancestor blocks in between $B_{parent}$ and $B_{child}$ and looking for evidence of equivocation for a conflicting block (Alg. \ref{alg:commit} lines \ref{equiv-begin}-\ref{equiv-end}). As mentioned in \S\ref{subsec:datastructures}, equivocating blocks are different blocks proposed in the same view. Thus, evidence of equivocation (equivocation proof) consists of \textsc{Vote-req} proposal messages with equivocating blocks. It is important that this equivocation proof contains a \textsc{Vote-req} for a \textit{conflicting} block. Otherwise, this equivocation proof indicates that a non-conflicting QC could have formed, which does not violate safety.
Upon detecting equivocation, replicas must explicitly abort committing $B_{parent}$ (Alg. \ref{alg:commit} lines \ref{equiv-detect-begin}-\ref{equiv-detect-end}). 
Note that aborting in this case does not violate Theorem \ref{siesta-main}: we show in our proofs that the existence of an equivocation proof for a conflicting block guarantees that the leader who proposed $B_{parent}$ must have equivocated, and thus is Byzantine faulty.
Otherwise, if no equivocation proof is found, the replica can safely commit $B_{parent}$ (Alg. \ref{alg:commit} line \ref{safe-commit}). Our full correctness proofs are in Appendix \ref{sec:proofs}.

\begin{algorithm}
\begin{algorithmic}[1]

\Event{receiving a valid $vr_r \gets$ \msg{Vote-req}{B}}
\State{$QC_{child} \gets B.QC_{anc}$}\Comment{This is the last QC in the chain}\label{gather-begin}
\State{$B_{child} \gets QC_{child}.B$}
\State{$QC_{parent} \gets B_{child}.QC_{anc}$}\Comment{This is the second to last QC in the chain}
\State{$B_{parent} \gets QC_{parent}.B$}\label{gather-end}
\If{$\Call{AreConsecutiveQCs}{QC_{parent},QC_{child}}$} \Comment{Consecutive views, safe to commit}\label{check-consec}
\State{\textbf{commit} $B_{parent}$}
\Else
\State{$C \gets \emptyset$}\Comment{Keeps track of possible conflicting QC blocks}
\For{$B_{anc} \in \Call{GetAncestors}{B_{parent}, B_{child}}$}\label{equiv-begin}
\State{$//$ \textit{look for possible conflicting QCs by finding equivocation proofs}}
\State{$C \gets \Call{FindEquivProof}{B_{parent}, B_{anc}.p,B_{anc}.NV_{set}}$}\label{equiv-end}
\If{$C \neq \emptyset$}\label{equiv-detect-begin}
\State{\textbf{abort}}\Comment{Possible conflicting QC not safe to commit}\label{equiv-detect-end}
\EndIf
\EndFor
\State{\textbf{commit} $B_{parent}$}\label{safe-commit}
\EndIf
\EndEvent

\Procedure{AreConsecutiveQCs}{$QC_{parent},QC_{child}$}
\State{\Return $QC_{parent}.B.v+1=QC_{child}.B.v$}
\EndProcedure

\Procedure{GetAncestors}{$B_{parent}, B_{child}$}
\State{$A \gets \emptyset$}
\State{$B_{anc} \gets B_{child}$}
\While{$B_{anc}.v>B_{parent}.v$}
\State{$A \gets A \cup B_{anc}$}
\State{$B_{anc}=B_{anc}.p$}
\EndWhile
\State{\Return $A$}
\EndProcedure

\Procedure{FindEquivProof}{$B_{parent}, B_{target}, S$}
\State{$C \gets \emptyset$}\Comment{Keep track of conflicting blocks}
\For{$s\in S$}\Comment{Iterate through all blocks}
\State{\textbf{parse} $s$ as \msg{Vote-req}{B'}}
\State{$//$ \textit{Different vote-reqs with the same view indicate equivocation!}}
\If{$B'.v=B_{target}.v \land B'\neq B_{target} \land B'$ \textit{conflicts} with $B_{parent}$}
\State{$C \gets C \cup B'$}
\EndIf
\EndFor
\State{\Return $C$}\Comment{Otherwise, no equivocation}
\EndProcedure

\end{algorithmic}

\caption{Commit Rule}\label{alg:commit}
\end{algorithm}
\section{Complexity and Performance Results}
\label{sec:complexity}

\sys{} is the first CLSO protocol to satisfy AHL. In the next section, we quantify the theoretical/pratictal benefits/tradeoffs of our approach. We summarize the main properties of \sys{} as compared to the state of the art CLSO protocols in Table~\ref{comparison}. Specifically, we measure the word communication complexity of each protocol excluding the view synchronizer, where a word contains a constant amount of signatures or bits. Word complexity measures the amount of words sent by honest parties over all possible executions and adversarial strategies. 
We say a protocol is responsive if after GST the latency between consecutive honest leaders is $O(\delta)$, where $\delta$ is the actual network delay. \sys{} is the first protocol to satisfy \nonconsec{}. It does so while maintaining quadratic word complexity with threshold signatures, linear word complexity with SNARKs, optimal phase complexity, and responsiveness with consecutive honest leaders. Analysis of communication complexity and responsiveness can be found in Appendix \ref{sec:proofs}.
\neil{TODO: Add analysis to appendix}

\begin{table*}[!t]
\caption{Comparison of CLSO BFT protocols (excluding view synchronizer)
}
\centering
\def\arraystretch{1.2}
\begin{tabular}{ |c|c|c|c|c|c| } 
 \hline
 \textbf{Protocol} & \makecell{\textbf{Complexity}\\\textbf{(thresh)}} & \makecell{\textbf{ Complexity}\\\textbf{(SNARKs)}} & \makecell{\textbf{\# of}\\\textbf{phases}} & \makecell{\textbf{Responsive}\\\textbf{(consec.)}} &
 \makecell{\textbf{\nonconsec{}}} \\
 \hline
Casper FFG~\cite{buterin2017casper} & $O(n)$ & $O(n)$ & 2 & No & No \\
 HotStuff~\cite{yin2019hotstuff} & $O(n)$ & $O(n)$ & 3 & Yes & No \\
 Fast-HotStuff~\cite{jalalzai2020fasthotstuff} & $O(n^2)$ & $O(n)$ & 2 & Yes & No \\ 
Jolteon~\cite{jolteon-ditto} & $O(n^2)$ & $O(n)$ & 2 & Yes & No \\ 
 \textbf{\sys{}} & $\mathbf{O(n^2)}$ & $\mathbf{O(n)}$ & \textbf{2} & \textbf{Yes} & \textbf{Yes} \\
 \hline
\end{tabular}
\label{comparison}
\end{table*}

Next, we formally quantify the performance gains made possible by strengthening the liveness condition from \consec{} to \nonconsec{}. \sys{} will commit blocks in the presence of any $k+1$ honest leaders after GST and no longer requires $k+1$ consecutive leaders. In Figure~\ref{fig:results} we compare \sys{} to 1) two-phase CLSO protocols (DiemBFTv4~\cite{diem2021}, Fast-Hotstuff~\cite{jalalzai2020fasthotstuff}, Jolteon~\cite{jolteon-ditto}), 2) three-phase CLSO protocols
(Hotstuff~\cite{yin2019hotstuff}).
We calculate the expected number of rounds necessary to commit an operation under \nonconsec{} and \consec{} when choosing leaders at random.  We additionally simulate commit latency when electing leaders in a round-robin fashion.

\begin{theorem}
With a random leader election scheme, after GST, the expected 
number of rounds necessary to commit a block under the \consec{} is $L = \frac{(1-p^k)}{(1-p)p^k}$~\cite{trials2021drekic} where $p$ = $\frac{n-f}{n}$
and $k$ is the number of consecutive honest leaders needed. 
\end{theorem}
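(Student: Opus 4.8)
The plan is to recognise this statement as the classical waiting-time problem for a run of $k$ consecutive successes in independent Bernoulli trials, and to obtain the closed form through a short recursion.

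First I would fix the probabilistic model. After GST and view synchronisation, an honest leader always drives its phase to completion, so under random leader election each round independently yields an honest leader (a ``success'') with probability $p=(n-f)/n$ and a faulty leader (a ``failure'') with probability $1-p$. By the CHLC, a block commits exactly when $k$ consecutive rounds are successes, so the number $T$ of rounds needed to commit equals the index of the first round that completes a run of $k$ consecutive successes, and $L=\mathbb{E}[T]$. I would note in passing that $T$ is almost surely finite and has finite mean, since $p=(n-f)/n>2/3$ is bounded away from $0$ for $n=3f+1$, which legitimises the expectation manipulations below.

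Next I would set up a one-step recursion on the target run length. Write $L_k=\mathbb{E}[T]$ for a required run of length $k$, with base case $L_0=0$. To achieve a run of length $k$ one must first achieve a run of length $k-1$ (costing $L_{k-1}$ rounds in expectation) and then observe one further leader: with probability $p$ that leader is honest and the run of $k$ completes after one extra round, while with probability $1-p$ it is faulty, the run is broken, and by independence the process restarts afresh after that wasted round. This yields
\[
L_k = L_{k-1} + p\cdot 1 + (1-p)\bigl(1 + L_k\bigr),
\]
which simplifies to $p\,L_k = L_{k-1}+1$, i.e.\ $L_k=(L_{k-1}+1)/p$. Equivalently one can perform a first-step analysis over states $0,1,\dots,k$ recording the current run length, solving the linear system $T_i = 1 + p\,T_{i+1} + (1-p)\,T_0$ with $T_k=0$; both routes give the same answer.

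Finally I would solve the recursion. Unrolling $L_k=(L_{k-1}+1)/p$ from $L_0=0$ gives, by induction, $L_k=\sum_{i=1}^{k}p^{-i}$, and summing the geometric series,
\[
L_k = \sum_{i=1}^{k}\frac{1}{p^{i}} = \frac{p^{-k}-1}{1-p} = \frac{1-p^{k}}{(1-p)\,p^{k}},
\]
which is the claimed formula. The algebra here is routine; the part I expect to require the most care is the modelling justification rather than the computation, namely that after GST an honest leader reliably produces a successful round (so that success is exactly Bernoulli$(p)$ and independent across rounds) and that the ``restart'' in the recursion is valid, which rests on the memorylessness of independent trials together with the finiteness of $\mathbb{E}[T]$ that makes the self-referential equation for $L_k$ well defined.
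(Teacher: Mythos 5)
Your proof is correct. Note that the paper itself supplies no proof for this statement---it only cites an external reference for the closed form---so there is no in-paper argument to compare against; your derivation fills that gap with the standard first-step recursion for the waiting time until $k$ consecutive successes in i.i.d.\ Bernoulli$(p)$ trials. The recursion $L_k=(L_{k-1}+1)/p$ with $L_0=0$ unrolls to $L_k=\sum_{i=1}^{k}p^{-i}=\frac{1-p^{k}}{(1-p)p^{k}}$, matching the claimed formula, and your justification of the restart step via memorylessness and the finiteness of $\mathbb{E}[T]$ (which holds since $p=(n-f)/n>2/3$ for $n=3f+1$) is exactly what is needed to make the self-referential equation legitimate. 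Your modeling caveat is also the right one to flag: the theorem implicitly assumes that after GST a round succeeds precisely when its leader is honest and that leader draws are independent across rounds, which is consistent with how the paper treats its companion result for the relaxed condition (the $\frac{3n}{n-f}$ bound), whose proof decomposes the wait as a sum of three independent geometric variables---your run-length recursion is the natural analogue of that argument for the consecutive-leader requirement.
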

\begin{theorem}
With a random leader election scheme, after GST with only omission faults, the expected 
number of rounds necessary to commit a block in \sys{} is $\frac{3n}{n-f}$
\end{theorem}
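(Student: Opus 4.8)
The plan is to reduce the commit latency to a classical \emph{negative binomial} counting problem over the sequence of view leaders, using the gap-tolerance guarantee of \sys{} as the bridge. The key enabler is Theorem~\ref{thm:main}: after GST and view synchronization, given three honest leaders in views $v < v' < v''$ with block $B$ proposed in view $v$, either $B$ commits in view $v''$ or a faulty replica is slashed. Since we restrict attention to omission faults, no replica ever equivocates, so the slashing branch never fires; hence \emph{any} three (possibly non-consecutive) honest leaders suffice to commit $B$. Establishing this first is essential because it is exactly what guarantees that intervening faulty leaders trigger only slow view changes and cost one wasted round each, \emph{without} resetting progress toward the commit.

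First I would make the reduction precise. Starting from the round in which commitment of a fresh operation is attempted, the first honest leader to be elected proposes the block $B$ carrying the operation; any faulty (omission) leader elected before then simply contributes a wasted round. The second honest leader thereafter certifies $B$ (forming the first QC) and extends it, and the third honest leader certifies that extension (the second QC), at which point the two-QC commit rule fires for $B$. Because only omission faults occur, no conflicting block is ever certified, so the No-QC proof required by the non-contiguous commit rule is satisfied vacuously, and the stored \textsc{Vote-req}/\textsc{Vote-resp} messages carried through each slow view change preserve all state needed for the eventual commit. Thus the number of rounds to commit equals the number of leader elections needed to observe three honest leaders.

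Next I would carry out the probability computation. Under random leader election each view's leader is drawn independently and uniformly from the $n$ replicas, so it is honest with probability $p = (n-f)/n$, independently across views. The number of rounds $X$ until the third honest leader appears is therefore the number of $\mathrm{Bernoulli}(p)$ trials needed to accumulate three successes, i.e. $X \sim \mathrm{NB}(3,p)$ in the total-trials parametrization, with $\Pr[X = j] = \binom{j-1}{2} p^3 (1-p)^{j-3}$ for $j \geq 3$. Its expectation is $\mathbb{E}[X] = 3/p = 3n/(n-f)$, the claimed bound. I would remark that this is exactly a factor of three over the single-honest-leader wait $1/p$, and that it replaces the $1/p^k$ scaling implicit in the consecutive-leader expectation $\frac{1-p^k}{(1-p)p^k}$ of the previous theorem, making explicit the throughput gain from relaxing the consecutive honest leader condition.

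The main obstacle is the reduction step rather than the arithmetic: I must argue carefully that under omission faults the three honest leaders may be arbitrarily spaced, that each intervening faulty leader costs \emph{exactly} one round, and that the commit of $B$ advances monotonically with the ordered honest leaders and is never undone by an omission. This is precisely the content of the gap-tolerance property, so once that is invoked the argument reduces to verifying the one-wasted-round-per-faulty-leader accounting, after which the negative-binomial expectation is immediate.
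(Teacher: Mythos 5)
Your proposal is correct and follows essentially the same route as the paper: both reduce the problem to waiting for three (possibly non-consecutive) honest leaders under independent $\mathrm{Bernoulli}(p)$ draws with $p=(n-f)/n$, and both compute the expectation as $3/p$ --- your negative-binomial $\mathrm{NB}(3,p)$ formulation is just the sum of three independent geometric waits $X_1+X_2+X_3$ that the paper uses. If anything, you are more careful than the paper in justifying the reduction itself (that gap-tolerance under omission faults means the three honest leaders need not be consecutive and each faulty leader costs exactly one round), which the paper leaves implicit.
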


\begin{proof}
Recall that two-round CLSO protocols require a sequence of three honest leaders to commit an operation. Let $p$ be the probability of selecting an honest leader $p=\frac{n-f}{n}$. As leaders are independent, the number of rounds until selecting the first, second, and third honest leaders can be viewed as three independent random variables $X_1$, $X_2$, and $X_3$ with the same distribution. 
The expected number of rounds until the selecting the nth honest leader $\mathbb{E}(X_n)$ follows a geometric distribution; by definition $\mathbb{E}(X_n)=\frac{1}{p}$.
\nc{I think some of the detail got lost here? Do we want to add a citation for it, or write the proof?}
For three-rounds, we have $L=\mathbb{E}(X_1) + \mathbb{E}(X_2) + \mathbb{E}(X_3)=\frac{3}{p}$.\fs{maybe add a half sentence why? -> pick first leader, then start same distribution to find second leader, then third.. since its sequential you just add up}

\end{proof}

Next, we simulate a scenario in which leaders are elected round-robin; we mark an operation has committed when there is sufficiently many honest leaders to satisfy the protocol's commit rule.  In CLSO protocols, the number of rounds directly influences both latency and throughput. If a round has latency $x$, then commit latency for an operation will be  $x * rounds$ while throughput is calculated by dividing the batch size by the expected commit latency. We write \consec(4) for Hotstuff (requires four consecutive leaders), \consec(3) for Fast-Hotstuff, Jolteon and DiemBFTv4, and finally AHL for our own protocol \sys{}. Figure~\ref{fig:results} shows the resulting commit latency CDF. As expected, \sys{} achieves an expected commit latency of $4.5$; \consec(3) requires $\approx 7$ rounds. \consec(4) has worst expected performance, taking $12$ rounds to commit.  Worst-case observed commit latency is especially interesting: \sys{} has relatively low worst-case latency, with 18 rounds, while \consec(3) protocols have a worst-case commit time of 76. \consec(4) has seven times worst latency, with a worst-case commit time of 129 rounds.

%
%
%
%

\begin{figure}[t]
    \centering
    \includegraphics[scale=0.4]{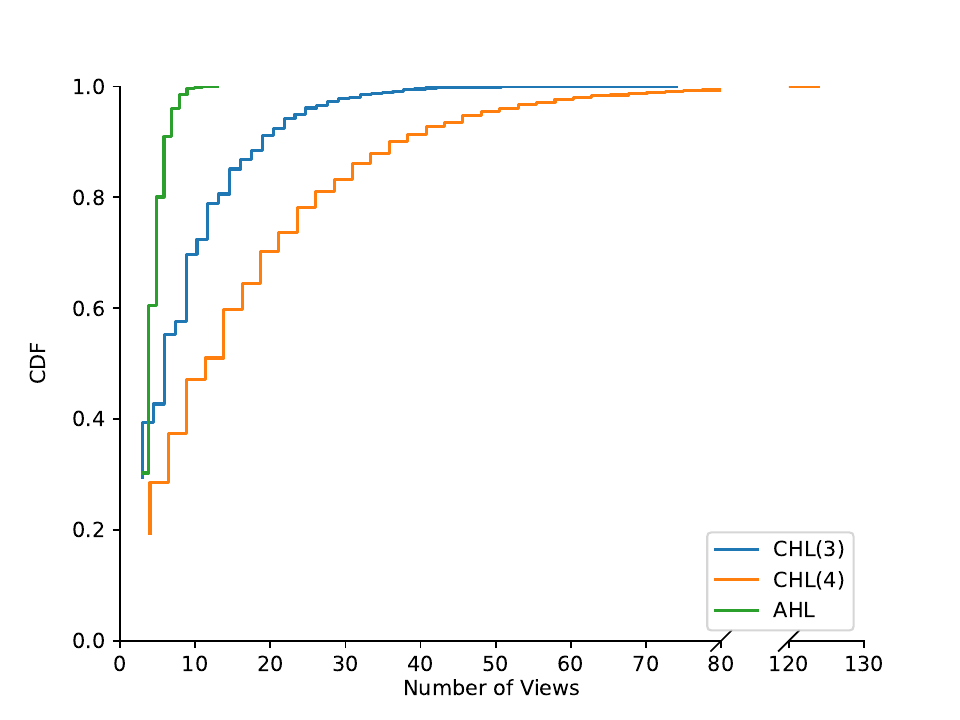}
    \caption{CDF of the number of views needed to commit an operation n = 100. \vspace{-1mm}}
    \label{fig:results}
\end{figure}
\section{Necessary Condition}
\label{sec:sequentiality}

\sys{} is the first CLSO protocol to guarantee the stronger \nonconsec{} condition while also maintaining safety. It is the only algorithm to guarantee the partial synchrony equivalent of \textit{sequentiality}~\cite{rotating-leader}. We find that this property is \textit{necessary} to support AHL, which explains why no other CLSO protocol successfully offered \nonconsec{}. 
Intuitively, sequentially states that, after GST, leaders must extend any blocks proposed by honest leaders. This is precisely what \sys{} aims to achieve via the use of \phaseone{} messages and implicit QC materialization.

\begin{definition} \textit{(Sequentiality).}
Let $L_i$ and $L_j$ be a pair of honest leaders, and wlog $i<j$. After GST, if $L_i$ sends a proposal $B_i$, then $L_j's$ proposal $B_j$ must extend $B_i$.

\end{definition}

We prove in Appendix \ref{lem:seq}:
\begin{theorem}\label{thm:siesta-seq}
    \sys{} satisfies \textbf{sequentiality}.
\end{theorem}


\medskip In contrast, other CLSO protocols only include QCs in their slow view change. A Byzantine leader can intentionally fail to form a QC for an honest block, thus precluding this block from appearing in future view changes. We prove in Appendix \ref{thm:prior-clso}:

\begin{theorem}
    Prior partially synchronous CLSO protocols \textit{do not} satisfy \textbf{sequentiality}.
\end{theorem}

\medskip Next, we show that sequentiality is, in fact, a necessary property to offering \nonconsec{}. 
We prove in Appendix \ref{thm:appendix-seq}: 

\begin{theorem}\label{thm:seq}
    \nonconsec{} CLSO is achievable only if \textbf{sequentiality} is satisfied.
\end{theorem}

\section{Conclusion}
\label{sec:conclusion}
This paper introduces \sys{}, the first CLSO protocol to guarantee that, after GST, the proposal of an honest leader will be committed after two honest views. In contrast, all other  CLSO protocols require three (or four) consecutive honest leaders to commit a block. \sys~observes that, to offer \nonconsec, a protocol must guarantee sequentiality, and that sequentiality can only be enforced through careful use of \phaseone~messages. These are messages that are instead traditionally discarded during view changes by prior work. \sys's stronger liveness guarantee allows it to outperform other CLSO protocols by up to 4x.


\bibliographystyle{ACM-Reference-Format}
\bibliography{references}

\clearpage
\appendix
\section{Proofs}\label{sec:proofs}

\subsection{Sequentiality}
\begin{theorem}\label{thm:prior-clso}
    Prior partially synchronous CLSO protocols \textit{do not} satisfy \textbf{sequentiality}.
\end{theorem}

\begin{proof}
We prove by counterexample \nc{for which protocol}, showing an execution which violates sequentiality. 
Let $GST=0$. In view $1$, $L_1$ is honest and multicasts its proposal $B_1$ to all replicas. All honest
replicas send votes for $B_1$ to $L_2$. 
However, in view $2$, $L_2$ is faulty and crashes (this can be before it receives messages, or even after it forms a QC 
for $B_1$). All of the
honest replicas timeout and in view $3$, $L_3$ is honest and collects timeout
messages from $n-f$ replicas excluding $L_2$. Note, that each timeout message contains \textit{only} the highest QC observed.
The highest QC that $p_3$ learns from the timeout messages is the genesis QC, so $L_3$'s proposal extends genesis. This 
violates sequentiality because $GST=0$ and $L_1$ is honest, and thus $L_3$ must extend $B_1$.    
\end{proof}

\begin{theorem}\label{thm:appendix-seq}
    \nonconsec{} CLSO is achievable only if \textbf{sequentiality} is satisfied.
\end{theorem}

\begin{proof}
Suppose for the sake of contradiction there exists a \nonconsec{} CLSO protocol,
$\Pi$, which does not satisfy sequentiality. Since $\Pi$ does
not satisfy sequentiality there must exist views $i,j$ where $i<j$ after GST,
such that $L_i$ (leader of view $i$) and $L_j$ (leader of view $j$) are honest, and $L_j$'s proposal does not extend $L_i$'s proposal.
We show an execution of $\Pi$ that violates the safety property of 
BFT-SMR. Let $GST=0$, $L_i$ and $L_j$ be honest, and all 
leaders in between views $i$ and $j$ be Byzantine. $L_i$ and $L_j$ broadcast proposals $B_i$ and $B_j$ at heights $h_i$ and $h_j$ respectively. 
Since honest leaders are elected infinitely often (CLSO property), there must
exist honest leaders $L_{i_1}, L_{i_2},...,L_{i_k}$, in views $i_1,...,i_k>i$; and likewise honest leaders $L_{j_1},L_{j_2},...,L_{j_k}$, in views $j_1,...,j_k>j$.
Since $\Pi$ satisfies \nonconsec{}, all honest 
replicas will commit both $B_i$, $B_j$ and all of its ancestors after views $i_k$ and $j_k$ respectively. 
Since, by assumption, $j>i$ and $B_j$ does not extend $B_i$, there must exist an ancestor proposal $B_l\neq B_i$ at height $h_i$ of the log from which $B_j$ descends.
This violates safety for BFT-SMR since honest replicas committed different proposals at the same height.
\end{proof}

\subsection{Safety}

\begin{definition}
(Safety). Honest replicas do not commit conflicting blocks.
\end{definition}

\begin{definition}
(Block Extension). A block $B'$ extends block $B$ (denoted as $B \longleftarrow B'$), if $B$
is an ancestor for $B'$: there must exist a path of parent blocks from $B'$ to $B$.
\end{definition}

\begin{definition}
(QC Extension). A block $B'$ extends certificate $QC_{B}$ (denoted as $QC_{B} \longleftarrow B'$), if $B'.QC_{anc}$ certifies a block, $B^*$ such that $B \longleftarrow B^*$
\end{definition}

\begin{definition}
(Conflicting blocks). Two blocks, $B$ and $B'$ conflict, if 
$\neg (B \longleftarrow B' \lor B' \longleftarrow B)$ ($B$ and $B'$ do not extend each other).
\end{definition}

\begin{definition}
(Equivocation Proof). An equivocation proof in view $v$, $\pi_v$, consists
of at least one pair of \msg{Vote-Req}{B} and \msg{Vote-Req}{B'}, where
$B.v=B'.v$ and $B \neq B'$.
\end{definition}

\begin{definition}
(Equivocation extension). A block $B'$ extends an equivocation in view $v$ 
(denoted as $\pi_v \longleftarrow B'$), if there exists a block $B^*$ such
that $B^* \longleftarrow B'$ and $B^*.S$ contains \msg{Vote-Req}{B_c}, where
$B_p \longleftarrow B_c$, and $B_p$ contains an equivocation proof for view 
$v$.
\end{definition}

\begin{lemma}\label{safety1}
For any two certified blocks, $B$ and $B'$ where $B.v=B'.v$, then $B=B'$.
\end{lemma}

\begin{proof}
Suppose for the sake of contradiction that $B \neq B'$.
This means that $n-f$ replicas voted for both $B$ and $B'$ in the same view 
($B.v=B'.v$). By quorum intersection,
at least one honest replica voted for both $B$ and $B'$, a contradiction.
\end{proof}

\begin{lemma}\label{safety2}
If there exists a certified block $B$, and a valid \msg{Vote-req}{B'} message, where $B'.v>B.v$ and
$B \nleftarrow B'$, then $\pi_{B.v} \longleftarrow B'$, where $\pi_{B.v}$
is an equivocation proof containing \msg{Vote-Req}{B}.
\end{lemma}

\begin{proof}
Suppose for the sake of contradiction that $B'$ does not extend an equivocation
proof $\pi_{B.v}$ containing \msg{Vote-Req}{B}. Let $B_f$ be the earliest (lowest view) ancestor block of $B'$, where $B_f.v>B.v$ and $B \nleftarrow 
B_f$. Also let
$B_p$ be $B_f$'s direct parent block. By the definition of $B_f$, $B_p$ cannot
conflict with $B$. Therefore, either
$B \longleftarrow B_p$ OR $B_p \longleftarrow B$ or $B_p.v=B.v$. 
We now consider each of the three cases individually.

\textbf{Case 1: $B \longleftarrow B_p$}. Since $B_p \longleftarrow B_f$ and
$B \longleftarrow B_p$, then $B \longleftarrow B_f$, a contradiction since
$B \nleftarrow B_f$.

\textbf{Case 2: $B_p \longleftarrow B$}. This implies $B_p.v<B.v$ since blocks
can only extend blocks in lower views. Since $B$ was certified, a quorum of
$n-f$ replicas updated their $vr_r$ to be \msg{Vote-Req}{B}. By quorum
intersection, at least one \textsc{Vote-Req} in $B_f.v$ must contain
\msg{Vote-Req}{B}. This is a contradiction since $B_f$'s parent is $B_p$,
which is in a lower view than $B$, even though $B_f.S$ contains
\msg{Vote-Req}{B}.

\textbf{Case 3: $B_p.v=B.v$}. If $B_p=B$, then $B \longleftarrow B_f \longleftarrow B'$, a contradiction. Otherwise $B_p \neq B$. Since $B$ was
certified by quorum intersection, $B_f.S$ must contain \msg{Vote-Req}{B}. And
since $B_f$'s direct parent is $B_p$ then $B_f.S$ must also contain 
\msg{Vote-Req}{B_p}. This constitutes an equivocation proof since $B.v=B_p.v$
and $B \neq B_p$. Since $B_f$ extends an equivocation
proof and $B_f \longleftarrow B'$, $\pi_{B.v} \longleftarrow B'$,
a contradiction.
\end{proof}

\begin{lemma}\label{safety3}
If an honest replica commits block $B$ (let $QC_B$ be the QC that certifies 
$B$, $QC_c$ be the QC that causes the replica to commit $B$, and $B_c$ the 
block that is certified by $QC_c$), then for every certified block $B'$ where 
$B.v<B'.v<B_c.v$, then $B \longleftarrow B'$.
\end{lemma}

\begin{proof}
Suppose for the sake of contradiction there exists a certified block $B'$ such
that $B.v<B'.v<B_c.v$ and $B \nleftarrow B'$. Since $B'$ is certified there must exist a valid
\msg{Vote-req}{B'} message. By lemma~\ref{safety2} $B'$ must have an ancestor block,
which contains an equivocation proof containing \msg{Vote-Req}{B} for view $B.v$. 
Since $B \longleftarrow B_c$
and $B \nleftarrow B'$, $B' \nleftarrow B_c$. Again applying 
lemma~\ref{safety2}, $B_c$ must have an ancestor block, which contains an
equivocation proof for $B'.v$ (with a \msg{Vote-Req}{B'} message). This is a contradiction, since an honest replica
committed $B$, which means it checked that every ancestor block of $B_c$ did not
contain a conflicting \textsc{Vote-req}; 
however, an ancestor block of $B_c$
contains a \msg{Vote-Req}{B'} message, and $B'$ has an ancestor block which 
contains an equivocation proof for $B.v$.
\end{proof}

\begin{lemma}\label{safety4}
If an honest replica commits block $B$ after receiving certified block $B_c$,
then for every valid \msg{Vote-Req}{B'} such that $B'.v>B_c.v$, 
$B \longleftarrow B'$.
\end{lemma}

\begin{proof}
We now prove the lemma by induction on view numbers $v'$ such that $v'=B'.v$,
and $v'>B_c.v$.

\textbf{Base case: } Let $v'=B_c.v+1$. We now consider \msg{Vote-Req}{B'}, where
$B'.v=v'$.
By lemmas~\ref{safety1} and~\ref{safety2} any certified block with
a view $\geq B.v$ must extend $B$. We now consider $B_c$'s direct parent block,
$B_p$. Since $B_c$ was certified, by quorum
intersection there must be at least one \msg{Vote-Req}{B_c}, where $B_c.QC_{anc}.B=B$, 
in $B'.NV_{set}$. Since $B'$ is valid, $B_p.v=B_c.v$. If $B_p.QC \neq B_c.QC_{anc}$, then
either $B_p.QC_{anc}.B.v<B_c.QC_{anc}.B.v$ or $B_p.QC_{anc}.B.v>B_c.QC_{anc}.B.v$.
If $B_p.QC_{anc}.B.v<B_c.QC_{anc}.B.v$, then $B_c$ has a higher QC, so $B'$
must extend $B_c$. Otherwise, $B_c.QC_{anc}.B.v=B.v$, $B_p.QC_{anc}.B.v>B.v$. By
lemma~\ref{safety2}, any quorum certificate with view between $B.v$ and $B_c.v$
must extend $B.v$, therefore $B \longleftarrow B_p$, and so $B \longleftarrow B'$.

\textbf{Induction Step: } We assume the lemma holds for all $v'-1>B.v$, and now we
consider $v'$. We now consider \msg{Vote-Req}{B'}, where $B'.v=v'$, and the direct
parent of $B'$ is $B_p$. By the base case and induction assumption if 
$B_c.v<B_p.v<v'$, then $B \longleftarrow B_p$, and so $B \longleftarrow B'$.
Otherwise, $B_p.v \leq B_c.v$. Since $B_c$ was certified by quorum intersection,
there must be at least one \msg{Vote-Req}{B_c}, where $B_c.QC_{anc}.B=B$, 
in $B'.NV_{set}$. Since $B'$ is valid, $B_p.v=B_c.v$. If $B_p.QC_{anc} \neq B_c.QC_{anc}$, then
either $B_p.QC_{anc}.B.v<B_c.QC_{anc}.B.v$ or $B_p.QC_{anc}.B.v>B_c.QC_{anc}.B.v$.
If $B_p.QC_{anc}.B.v<B_c.QC_{anc}.B.v$, then $B_c$ has a higher QC, so $B'$
must extend $B_c$. Otherwise, $B_c.QC_{anc}.B.v=B.v$, $B_p.QC_{anc}.B.v>B.v$. By
lemma~\ref{safety2}, any quorum certificate with view between $B.v$ and $B_c.v$
must extend $B.v$, therefore $B \longleftarrow B_p$, and so $B \longleftarrow B'$.
\end{proof}

\begin{lemma}\label{safety5}
If an honest replica commits block $B$, then for every certified block $B'$ such
that $B'.v>B.v$, $B \longleftarrow B'$.
\end{lemma}

\begin{proof}
By lemmas~\ref{safety1}, ~\ref{safety2}, and~\ref{safety3}, 
any certified block $B'$ such that
$B.v<B'.v \leq B_c.v$ must extend $B$. By lemma~\ref{safety4}, any valid
\msg{Vote-Req}{B'} where $B'.v>B_c.v$, must extend $B$. Therefore any
certified block with view $> B_c.v$ must also extend $B$.
\end{proof}

\begin{lemma}\label{safety6}
For any two blocks, $B$ and $B'$ committed by honest replicas, either
$B \longleftarrow B'$ or $B' \longleftarrow B$.
\end{lemma}

\begin{proof}
Note that any committed block must be certified. By lemma~\ref{safety1}, 
$B \neq B'$. If $B.v>B'.v$ then by lemma~\ref{safety5}, $B' \longleftarrow B$.
If $B.v<B'.v$ then by lemma~\ref{safety5} $B \longleftarrow B'$.
\end{proof}

\subsection{Liveness}
Like all other partially synchronous BFT SMR protocols, liveness depends on a 
critical component called the view synchronizer. 
There are many synchronizers such as Cogsworth~\cite{naor2020cogsworth} 
and FastSync~\cite{bravo2020making}. Any of these synchronizers can be used with 
this protocol. We assume the synchronizer satisfies the following theorem.

\begin{theorem}\label{synchronizer}
Let $v$ be a view with an honest leader after GST. Within time bound of $T_f$
from when the first honest replica enters $v$, all honest replicas also enter $v$
and received a proposed block $B_v$ from the leader of $v$.
\end{theorem}

\begin{lemma}\label{lem:seq}
If an honest leader proposes a block $B$ such that $B.v$ is a view after GST, then
for every valid \msg{Vote-Req}{B'} such that $B'.v \geq B.v$ must extend $B.v$.
\end{lemma}

\begin{proof}
We prove the lemma by induction over view numbers $v' \geq B.v$, where $v'=B.v$.

\textbf{Base Case: $v'=B.v$}. Since the leader of $B.v$ is honest, it will not
equivocate, and therefore the only valid \textsc{Vote-Req} in view $B.v$ is
for $B$. Since $B \longleftarrow B$, the base case is satisfied.

\textbf{Induction Step:}. We assume the lemma holds for all $v'-1 \geq B.v$, and
now consider view $v'$. Let $B_p$ be the direct parent block of $B'$. If 
$B_p.v \geq B.v$ then by the base case and induction assumption 
$B \longleftarrow B_p$, so $B \longleftarrow B'$, and we are done. Otherwise,
$B_p.v<B.v$. Since view $B.v$ had an honest leader and it was a view after GST,
by theorem~\ref{synchronizer} and the well-formedness of an honest leader's 
proposal all honest replicas must have entered $v$ before
their timeout expired, and updated their $vr_r$ and $vp_r$ to be for block $B$.
Therefore by quorum intersection, $B'.NV_{set}$ at the minimum is guaranteed to have
a \msg{Vote-Req}{B} message. And since $B_p.v<B.v$, $B'$ cannot directly extend
$B_p$ since there is a higher \textsc{Vote-Req} in $B'.NV_{set}$.
\end{proof}

\begin{lemma}
Let $v_s$ be a view after GST. Every honest replica eventually commits some block
$B$ with $B.v \geq v_s$.
\end{lemma}

\begin{proof}
Since the number of Byzantine replicas is bounded by $f$, we can find views
$v<v'<v''$ all with honest leaders, such that $v \geq v_s$. From 
theorem~\ref{synchronizer}, all honest replicas receive the proposed block $B_v$
from the leader of $v$ within $T_f$ time of the first honest replica entering
$v$. When instantiated with a view synchronizer such as ~\cite{naor2020cogsworth} or ~\cite{jolteon-ditto}, $T_f=2\Delta + \Delta=3\Delta$ (the extra $\Delta$) is for the 
materialization timer).
By the well-formedness of an honest leader's proposal, all honest replicas
will accept it, update their local $vr_r$ and $vp_r$, to the corresponding 
\msg{Vote-Resp}{B_v} message, and send it to the next leader. If $v'=v+1$,
then the leader of $v'$ will receive the \textsc{Vote-resp} messages within $\Delta$ time after view synchronization, form a QC for $B$ using the votes from $2f+1$ honest
replicas, and send the QC which will arrive to all honest replicas by time $T_f+2\Delta$. Otherwise, if $v'>v+1$, then the leader of $v'$ is honest and will 
wait the view timeout for the previous view. 
After view synchronization and $\Delta$ time it will receive \textsc{NewView} messages from 
all honest replicas. By lemma~\ref{lem:seq} and the fact that honest 
replicas only update their $vr_r$ for higher views, the leader of $v'$ will
receive $2f+1$ \textsc{Vote-Resp} messages for blocks that extend $B$. $L_{v'}$
will then be able to form a quorum certificate for $B$ (if it hasn't already
formed). Then by theorem~\ref{synchronizer}, $L_{v'}$'s proposed block $B'$,
which contains a quorum certificate for $B$ (or a block that extends $B$), will
be received by all honest replicas within the first honest replica entering
$v'$. By the well-formedness of $L_{v'}$'s proposal, all honest replicas will
accept it, update their local $vr_r$ and $vp_r$ to the corresponding $B'$, and
sent their votes to the next leader. If $v''=v'+1$,
then the leader of $v'$ will form a QC for $B'$ using the votes from $2f+1$ honest
replicas. Otherwise, if $v''>v'+1$, then the leader of $v''$ is honest and will 
wait the timeout. After view synchronization and $\Delta$ time it will receive \textsc{NewView} 
messages from all honest replicas. By lemma~\ref{lem:seq} and the fact that honest 
replicas only update their $vr_r$ for higher views, the leader of $v''$ will
receive $2f+1$ \textsc{Vote-Resp} messages for blocks that extend $B'$. $L_{v''}$
will then be able to form a quorum certificate for $B'$ (if it hasn't already
formed). At this point $L_{v''}$ will commit $B$ since $B$ and $B'$ have been
certified; and $L_v$ is honest, which means that it is impossible for an 
equivocation proof for view $v$ to be contained within any block. $L_{v''}$, then
sends its proposed block $B''$, and by theorem~\ref{synchronizer} all honest
replicas will receive this proposal, accept it, and then also commit $B$ since it
satisfies the commit rule.

Since it is assumed that every client transaction will be repeatedly proposed
until it is committed, then eventually every client transaction will be committed
by all honest replicas.
\end{proof}

\subsection{Communication Complexity}
Protocol logic is independent of the specific signature scheme chosen; protocol complexity, 
however, is tightly coupled with this choice. In the fast view change, the leader sends a block
containing the view number, the parent block pointer, and a QC. The view number and parent block 
pointer consist of a constant number of bits. The QC contains $n-f$ signed 
\textsc{Vote-resp} messages for a given block. These $n-f$ signatures can be compressed into $O(1)$
words by using threshold signatures~\cite{gueta2019sbft}. Therefore, the leader sends $O(1)$ words
to $O(n)$ replicas. Each replica then sends a single \textsc{Vote-resp} message to the next 
leader, resulting in a total word complexity of $O(n)$ with threshold signatures.

\par For the slow view change protocol, the leader sends a block
containing the view number, the parent block pointer, the QC, and the set of $O(n)$ 
\textsc{New-view} 
messages to $O(n)$ replicas. The view number, parent block pointer, and the QC are all $O(1)$ 
words with threshold signatures. Each \textsc{New-view} message contains a view number, 
\textsc{Vote-resp} message, and a \textsc{Vote-req} message. Each \textsc{Vote-req} message 
contains a block, which contains a view number, parent block pointer, QC, and set of 
\textsc{New-view} messages. These set of \textsc{New-view} messages are used to only ensure 
validity of the block and are retrieved as part of getting the full history of the block. 
Therefore, the leader disseminates a block containing $O(n)$ words to $O(n)$ replicas, resulting 
in a total word complexity of $O(n^2)$ with threshold signatures. As a result, 
\sys{} has an overall word 
complexity of $O(n^2)$ with threshold signatures. This complexity can be further reduced to $O(n)$
by using SNARKs to prove block validity rather than including the full set of \textsc{New-view}
messages that the leader received~\cite{abspoel2020malicious}.

\subsection{Responsiveness}
There are several definitions of responsiveness but we use the definition from~\cite{rotating-leader} adapted for partial synchrony, restated here for convenience. 

\begin{definition} (\textit{Responsiveness (Consecutive Honest)})
    We say a protocol is responsive (consecutive honest) if after GST, for any two honest leaders, $L_i$, $L_j$ for views $i<j$ (wlog) respectively, $L_j$ sends its proposal within $O(\delta)$ 
    time after view $i$ finishes.
\end{definition}

After GST, a slow view change only occurs in \sys{} if there is a faulty leader preceding an honest leader. Therefore, between consecutive honest leaders only a fast view change occurs. In the fast view change there is no timeout that is waited, therefore the next leader starts sending its proposal within $O(\delta)$ time of the previous view finishing, where $\delta$ is the actual network delay.

\end{document}